\pdfoutput=1

\documentclass[11pt,letterpaper]{article}

\usepackage[utf8]{inputenc}
\usepackage[T1]{fontenc}
\usepackage[margin=1in]{geometry}
\usepackage{url}
\usepackage{ifthen}
\usepackage{cite}
\usepackage{amsmath}
\usepackage{amssymb}
\usepackage{amsthm}
\usepackage{graphicx}
\usepackage{enumitem}
\usepackage{tikz}

\theoremstyle{plain}
\newtheorem{theorem}{Theorem}
\newtheorem{lemma}[theorem]{Lemma}
\newtheorem{proposition}[theorem]{Proposition}

\theoremstyle{definition}
\newtheorem{definition}{Definition}

\theoremstyle{remark}
\newtheorem{remark}{Remark}
\newenvironment{IEEEproof}[1][Proof]{\begin{proof}[#1]}{\end{proof}}

\newcommand{\F}{\mathbb{F}}
\newcommand{\bbF}{\mathbb{F}}

\newcommand{\rankop}{\operatorname{rank}}
\newcommand{\kr}{\operatorname{kr}}

\newcommand{\rowsp}{\operatorname{rowsp}}

\newcommand{\Span}{\operatorname{span}}

\newcommand{\PGL}{\operatorname{PGL}}

\newcommand{\dstar}{d_{\min}^{\star}}

\providecommand{\Mob}{\operatorname{Mob}}
\providecommand{\id}{\operatorname{id}}

\begin{document}

\title{On Codes with Support-Constrained Parity Checks}

\author{
  Barron Han\\
  Computing and Mathematical Sciences\\
  California Institute of Technology\\
  Pasadena, CA, USA\\
  \texttt{bshan@caltech.edu}
  \and
  Hikmet Yildiz\\
  Independent Researcher\\
  Pasadena, CA, USA\\
  \texttt{hhikmetyildiz@gmail.com}
  \and
  Babak Hassibi\\
  Electrical Engineering\\
  California Institute of Technology\\
  Pasadena, CA, USA\\
  \texttt{hassibi@caltech.edu}
}
\date{}

\maketitle

\begin{abstract}
We study linear codes that maximize minimum distance subject to arbitrary support constraints on the parity-check matrix.
Such constraints arise naturally in the design of LDPC codes, locally repairable codes, and hardware-constrained systems where each parity check must involve only a limited number of code symbols. They are also essential in quantum error correction, where sparse stabilizers reduce measurement noise and respect the connectivity constraints of physical qubit architectures.
We derive the optimal minimum distance possible given support constraints on the parity-check matrix and show it is achievable over sufficiently large fields.
When this maximum distance coincides with the Singleton bound for unconstrained parity check matrices, the dual GM-MDS construction yields generalized Reed--Solomon codes obeying the mask.
In the generator-matrix setting, the GM-MDS theorem guarantees that the optimal distance can always be achieved by a subcode of a generalized Reed--Solomon code while satisfying arbitrary support constraints. We show that this is not true for the parity-check setting.
We exhibit a set of support constraints, derived from the vertex-edge incidence of $K_{6,6}$, for which the optimal minimum distance cannot be realized by any subcode of a generalized Reed--Solomon code over any field.
We also analyze structured constraint families---regular, balanced, and cyclic masks---through numerical optimization, providing design guidance for practical code constructions.

\end{abstract}

\section{Introduction}
\label{sec:intro}
The design of linear codes with large minimum distance under sparsity constraints is a well-studied problem. Support constraints on the generator matrix are particularly relevant for distributed storage systems and multiple access networks~\cite{dauMA, halbawidistribute}, as they restrict code symbols to linear combinations of message symbols from accessible nodes only. This leads to a fundamental question: given a prescribed set of support constraints, what is the maximum achievable minimum distance? Furthermore, can this optimum be achieved using structured codes over smaller fields? The GM-MDS theorem, conjectured in~\cite{DauGMMDS} and proven in~\cite{YildizGMMDS, LovettGMMDS}, addresses this by establishing that for any support constraint on the generator matrix of a length~$n$ code, the optimal minimum distance,~$d$, is achievable by a subcode of a generalized Reed--Solomon code with field size $q \geq 2n - d$.

This paper studies the dual problem of optimal minimum distance codes under sparsity constraints on the parity check matrix, which effectively limit each parity check equation to a linear combination of a prescribed set of code symbols. This framework is driven by two primary motivations. First, it models scenarios where parity checks must ``sparsely'' measure code symbols, such as limiting the frequency with which a specific symbol is accessed (column weight) or restricting the maximum number of terms per check equation to $w$ (row weight). Second, arbitrary sparsity patterns can be used to enforce topological design constraints. For instance, in physical circuit implementations, restricting bit interactions to local ``neighbors'' eliminates the need for additional long-range connections, thereby reducing hardware complexity.

In the classical setting, parity check sparsity is foundational to the theory of Low-Density Parity-Check (LDPC) codes~\cite{GallagerLDPC}. The sparse structure of the parity check matrix enables iterative message-passing decoders whose block complexity is linear in the number of edges per iteration, in contrast to the exponential complexity of maximum-likelihood decoding for general linear codes. The capacity-approaching performance of irregular LDPC codes~\cite{LDPCCap}, combined with their hardware-friendly decoding, has made them the channel codes of choice in modern communication standards: the 5G New Radio specification~\cite{3GPP5G} mandates quasi-cyclic LDPC codes with structured row and column weight constraints that are dictated by decoder architecture. More broadly, the study of codes with parity check sparsity constraints---including expander codes~\cite{SipserSpielman} and spatially coupled LDPC codes---reveals a fundamental tension between sparsity, minimum distance, and decoding complexity. Sparse parity checks reduce hardware cost and decoding latency, but a row weight of $w$ limits the minimum distance of any single check equation, and careful global design is needed to ensure that the code's minimum distance remains large.

A complementary motivation arises from distributed storage. In locally repairable codes (LRCs)~\cite{TamoBargLRC}, each code symbol must be recoverable from a small local group, which amounts to requiring low-weight parity checks supported within prescribed locality sets. Partial MDS codes~\cite{BlaumPMDS} further partition checks into local and global groups, imposing a coupled block diagonal sparsity pattern on the parity check matrix. In each of these settings, the sparsity pattern is dictated by the system architecture and cannot be freely chosen. The question we ask---\emph{what is the optimal minimum distance achievable under a given parity check support constraint?}---therefore subsumes these families as special cases.

These motivations extend naturally to quantum error correction. In the Calderbank-Shor-Steane (CSS) construction~\cite{CalderbankCSS,SteaneCSS}, the nonzero entries of the classical parity check matrices become stabilizer measurements on qubits. Dense parity check matrices therefore require high-weight stabilizers that couple many qubits, exacerbating noise and violating the connectivity constraints of physical architectures~\cite{RuizQuantum, SaboQuantum}. Sparse parity check matrices are thus essential for practical fault-tolerant quantum computing, as they enable low-weight, local stabilizer measurements.

While quantum analogs of LDPC~\cite{gottesmanLDPC}, Reed--Solomon~\cite{GrasslRS}, and Tamo--Barg~\cite{GolowichLRC, TamoLRC} codes can achieve linearly scaling distance and dimension, implementing them often requires long-range connections or dense parity checks. In practice, topological codes such as surface codes~\cite{KitaevSurface, FowlerSurface} use only local connections but achieve distance scaling only as $\sqrt{n}$. Designing codes with near-optimal distance, efficient decoding, and hardware-compatible sparse parity checks remains an important open problem towards fault tolerant quantum computation.

\subsection{Outline}

In Section~\ref{sec:expansion}, we formalize the sparsity-constrained parity-check setting, derive the optimal minimum distance from the support structure of the mask, and show that this distance is achievable over sufficiently large fields.
Section~\ref{sec:mds} treats the MDS regime: when the derived bound coincides with the Singleton bound, the dual GM-MDS construction yields a generalized Reed--Solomon code obeying the mask.
Section~\ref{sec:certificates} introduces a factorization of a Vandermonde matrix that certifies when a sparse parity-check code can be realized as a subcode of a generalized Reed--Solomon code.
Section~\ref{sec:counterexample} demonstrates the limits of this approach: we exhibit sparsity constraints arising from the vertex-edge incidence of the complete bipartite graph $K_{6,6}$ for which no subcode of a generalized Reed--Solomon code achieves the optimal distance over any field.
Finally, Section~\ref{sec:optimizing} studies structured constraint families---regular, balanced, and cyclic---and presents numerical optimization results.

\subsection{Notation}

For $n \geq 0$, $[n] := \{1, \ldots, n\}$.  Given sets $S_1, \ldots, S_n$ and $R \subset [n]$, we write $S(R) := \bigcup_{j\in R} S_j$.  An $[n,k,d]_q$ code is a linear code over $\F_q$ with length~$n$, dimension~$k$, and minimum distance~$d$.

\section{Sparse Parity-Check Masks and Exact Dimension}
\label{sec:expansion}
Let $\mathbb{F}_q$ be a finite field.  We use $m$ parity-check equations; the intended dimension is $n-m$, achieved when $\rankop H = m$.  A linear code $C \subseteq \mathbb{F}_q^n$ can be specified as the kernel of a parity-check matrix $H \in \bbF_q^{m \times n}$.  We adopt the convention $d_{\min}(\{0\}) = n+1$.  We impose sparsity via a binary mask
\begin{equation}
  M \in \{0,1\}^{m \times n},\qquad
  \begin{cases}
    M_{ij}=0 \implies H_{ij}=0,\\
    M_{ij}=1 \implies H_{ij}\ \text{is free}.
  \end{cases}
\end{equation}

For each column $j \in [n]$, define $S_j := \{i \in [m] : M_{ij} = 1\}$, and for $R \subseteq [n]$, define $S(R) := \bigcup_{j\in R} S_j$.

\begin{definition}[Structural row rank]
\begin{equation}
    \rho(M) := \max\{\rankop(H) : H_{ij}=0 \text{ whenever } M_{ij}=0\}.
\end{equation}
\end{definition}

Let $G_M$ be the bipartite graph with left vertices the rows, right vertices the columns, and edges at the free positions of $M$.  By Hall's theorem \cite{hallmarriage}, $\rho(M) = \nu(G_M)$, the maximum matching size in $G_M$.  In particular, $\rho(M) = m$ if and only if $|N(U)| \ge |U|$ for every $U \subseteq [m]$, where

\begin{equation}
  N(U) := \{j \in [n] : \exists\, i \in U,\, M_{ij}=1\}.
\end{equation}

\begin{definition}[Structural Kruskal rank]
    \begin{equation}
    \label{dmin_opt}
    \begin{split}
    \tau(M) = \max \{ & s \in \{0,1,\ldots,n\} : \ \forall R \subseteq [n], \\
    & |R| \leq s \implies |S(R)| \geq |R| \},
    \end{split}
    \end{equation}
\end{definition}

The case $\rho(M)=m$ is the full-structural-row-rank regime: the mask can support $m$ independent parity-check equations, so a filling with rank $m$ defines a code of the intended dimension $n-m$.

While the structural row rank determines whether the mask can support a code of the intended dimension $n - m$, the structural Kruskal rank controls the best achievable minimum distance.  Together, these quantities characterize the achievable parameters:

\begin{theorem}[Minimum distance of sparse parity check]
\label{thm:dminopt}
Let $M \in \{0,1\}^{m \times n}$ with column supports $S_j$.  Define
\[
  \begin{aligned}
  \dstar(M) &:= \tau(M)+1,\\
  \Delta(M) &:= \rho(M)+\tau(M)\binom{n}{\tau(M)}.
  \end{aligned}
\]

Then every filling $H$ obeying $M$ over any field satisfies $\kr(H) \le \tau(M)$, hence $d_{\min}(\ker H) \le \dstar(M)$.  Moreover, for every prime power $q>\Delta(M)$, there exists a filling $H\in\bbF_q^{m\times n}$ obeying $M$ with $\rankop(H) = \rho(M)$ and $d_{\min}(\ker H) = \dstar(M)$.

In particular, if $\rho(M) = m < n$, then $\ker H$ is an $[n, n{-}m, \dstar(M)]_q$ code obeying $M$.
\end{theorem}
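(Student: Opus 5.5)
The plan has two parts: an upper bound valid over every field, and a generic construction via the Schwartz--Zippel lemma.

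\emph{Upper bound.} Write $\tau=\tau(M)$. If $\tau=n$ there is nothing to prove, since $\kr(H)\le n$ always and $d_{\min}\le n+1$ by the convention $d_{\min}(\{0\})=n+1$. Otherwise, by maximality of $\tau$ there is a set $R\subseteq[n]$ with $|R|\le \tau+1$ and $|S(R)|<|R|$. Every filling $H$ obeying $M$ has its columns indexed by $R$ supported on the rows $S(R)$. Hence these columns lie in a space of dimension $|S(R)|<|R|$ and are linearly dependent. Therefore $\kr(H)<|R|\le\tau+1$, i.e.\ $\kr(H)\le\tau$. A dependency among the columns in $R$ is a nonzero codeword of $\ker H$ of weight at most $|R|\le \tau+1$, so $d_{\min}(\ker H)\le \dstar(M)$.

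\emph{Achievability.} Treat the free entries of $H$ as indeterminates $x_{ij}$ for $M_{ij}=1$, and exhibit a nonzero polynomial whose nonvanishing guarantees both properties.
\begin{itemize}
\item \emph{Rank.} By Hall's theorem there is a matching of size $\rho(M)$ in $G_M$. The corresponding $\rho\times\rho$ minor of the symbolic matrix contains the monomial given by the matching edges with coefficient $\pm1$, and distinct permutations give distinct monomials. So this minor is a nonzero polynomial $P_0$ of degree $\rho$.
\item \emph{Kruskal rank.} For each $R$ with $|R|=\tau$ we have $|S(R)|\ge|R'|$ for every $R'\subseteq R$, since $|R'|\le\tau$. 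By Hall's theorem applied to the bipartite graph between $R$ and $S(R)$, there is a matching saturating $R$. The corresponding $\tau\times\tau$ minor of the submatrix $H_{S(R),R}$ is a nonzero polynomial $P_R$ of degree $\tau$, by the same monomial argument. Nonvanishing of $P_R$ makes the columns in $R$ independent, and it suffices to handle sets of size exactly $\tau$, since subsets of independent sets are independent.
\end{itemize}
Let $P=P_0\prod_{|R|=\tau}P_R$, a nonzero polynomial of total degree at most $\rho+\tau\binom{n}{\tau}=\Delta(M)$. By Schwartz--Zippel, or the simpler fact that a nonzero polynomial of degree at most $\Delta$ with individual degree at most $\Delta<q$ has a nonroot in $\F_q^N$, there is an assignment over $\F_q$ with $P\ne0$ whenever $q>\Delta(M)$.

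For such a filling, $\rankop H\ge\rho$, and $\rankop H\le\rho$ by definition, so $\rankop H=\rho$. Also $\kr(H)\ge\tau$, so every nonzero codeword has weight at least $\tau+1$, and combined with the upper bound $d_{\min}=\tau+1=\dstar(M)$. Two degenerate cases need a check:
\begin{itemize}
\item If $\tau=n$, the kernel is $\{0\}$ and the convention $d_{\min}(\{0\})=n+1$ matches $\dstar(M)$.
\item If $\tau=0$, the product over $R$ is empty (or $P_\emptyset=1$), so $P$ remains nonzero.
\end{itemize}
The final sentence then follows: when $\rho=m<n$, the kernel has dimension $n-m$.

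The only delicate step is confirming that each determinant is a nonzero polynomial over every field. This holds because each monomial corresponds to a unique bijection, so the coefficients are $\pm1$ and cannot cancel, even in characteristic $p$.
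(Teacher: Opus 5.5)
Your proposal is correct and follows the paper's proof essentially step for step: the same violating set of size $\tau+1$ for the upper bound, the same Hall-matching minors (one of size $\rho$, one of size $\tau$ per $\tau$-subset) made nonzero by the non-cancelling matching monomial, and Schwartz--Zippel applied to their product of degree $\Delta(M)$. One small slip: in the Kruskal-rank bullet, Hall's condition should read $|S(R')|\ge|R'|$ for every $R'\subseteq R$, not $|S(R)|\ge|R'|$.
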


\begin{IEEEproof}
Write $\tau=\tau(M)$ and $\rho=\rho(M)$.  We use the identity $d_{\min}(\ker H)=1+\kr H$, where $\kr H$ is the largest $r$ such that every set of at most $r$ columns of $H$ is linearly independent.

\emph{Upper bound.}  If $\tau<n$, maximality of $\tau$ in \eqref{dmin_opt} gives $R_0\subseteq[n]$ with $|R_0|=\tau+1$ and $|S(R_0)|<|R_0|$.  For any $H$ obeying $M$, the columns $H_{:,R_0}$ are supported on $S(R_0)$, so $\rankop(H_{:,R_0})\le |S(R_0)|<|R_0|$ and thus $\kr H\le\tau$.  If $\tau=n$, the bound $d_{\min}(\ker H) \le n+1$ is immediate from the convention $d_{\min}(\{0\})=n+1$.

\emph{Achievability.}  The cases $\tau=0$ and $\tau=n$ are immediate.  For $0<\tau<n$, let $X$ be the symbolic matrix with an independent variable $x_{ij}$ at each allowed position.

For each $R\subseteq[n]$ with $|R|=\tau$, the expansion hypothesis gives $|S(J)|\ge|J|$ for all $J\subseteq R$, so Hall's theorem yields a matching of $R$ into $S(R)$.  Let $I_R$ be the matched rows and set $p_R:=\det X_{I_R,R}$.  The matching contributes a monomial $\prod_{j\in R}x_{\mu(j),j}$ that no other permutation can produce (the variables are indexed by distinct positions), so $p_R$ is nonzero with $\deg p_R=\tau$.

Similarly, since $\rho(M) = \nu(G_M)$, choose a maximum matching of size $\rho$ in $G_M$.  The corresponding minor $p_0:=\det X_{I_0,J_0}$ is nonzero with $\deg p_0=\rho$.

Form $P:=p_0\prod_{|R|=\tau}p_R$.  This is nonzero with $\deg P=\rho+\tau\binom{n}{\tau}=\Delta(M)$.  Evaluate the free entries of $X$ independently and uniformly over $\mathbb{F}_q$.  By the Schwartz--Zippel lemma,
\[
  \Pr[P(H)=0] \le \frac{\Delta(M)}{q} < 1
  \qquad \text{for } q>\Delta(M).
\]
Thus some evaluation $H$ has $P(H)\ne 0$.

Since $p_0(H)\ne0$, we get $\rankop(H)\ge\rho$; by definition of $\rho(M)$, equality holds.  Since $p_R(H)\ne0$ for every $\tau$-subset $R$, every such subset is independent; since any smaller subset extends to a $\tau$-subset and a subset of an independent set is independent, $\kr H\ge\tau$.  The upper bound gives $\kr H=\tau$.  Therefore $d_{\min}(\ker H)=\tau+1=\dstar(M)$ and $\dim\ker H=n-\rho(M)$, which equals $n-m$ when $\rho(M)=m$.
\end{IEEEproof}

Theorem~\ref{thm:dminopt} is a generic existence result: it identifies the best distance permitted by the support pattern alone but does not provide an efficient decoder or a small-field algebraic construction.  The remaining question is whether this generic optimum can be realized as a subcode of a structured parent family---especially a generalized Reed--Solomon code---over smaller fields.

\section{The MDS Regime and the Dual GM-MDS Construction}
\label{sec:mds}
For the existence of an MDS code, a necessary condition is that the expansion condition of Theorem~\ref{thm:dminopt} holds through size $m$:
\begin{equation}
\label{mds}
    |R| \le m \implies |S(R)| \ge |R|
    \qquad\text{for every } R \subseteq [n].
\end{equation}

\begin{proposition}[Sparse parity-check MDS construction]
\label{prop:mds}
Let $M \in \{0,1\}^{m \times n}$ satisfy \eqref{mds}.  Then, for every prime power $q \geq n + m - 1$, there exists a generalized Reed--Solomon $[n, n{-}m]_q$ code whose parity-check matrix obeys $M$.
\end{proposition}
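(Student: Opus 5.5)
The plan is to reduce the statement to the GM-MDS theorem applied to the \emph{generator} matrix of the dual code. A GRS code is MDS and its dual is again a GRS code, so it suffices to find an $[n,m]_q$ GRS code $D$ with generator matrix $H\in\F_q^{m\times n}$ obeying the mask $M$. Then $C:=D^\perp=\ker H$ is an $[n,n-m]_q$ GRS code, and $H$ is a parity-check matrix of $C$ obeying $M$. This is the setting of the GM-MDS theorem, where the support constraints are now on the generator matrix of $D$.

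The GM-MDS theorem (conjectured in~\cite{DauGMMDS}, proven in~\cite{YildizGMMDS, LovettGMMDS}) is stated in terms of zero patterns. Let $Z_i:=\{j\in[n]: M_{ij}=0\}$ for $i\in[m]$; these are the positions where row $i$ of $H$ must vanish. The theorem says that an $m\times n$ generator matrix of an $[n,m]$ GRS code with row $i$ vanishing on $Z_i$ exists over every field with $q\ge n+m-1$, provided two conditions hold. First, $|Z_i|\le m-1$ for each $i$. Second, for every nonempty $\Omega\subseteq[m]$,
\[
\Bigl|\bigcap_{i\in\Omega} Z_i\Bigr| + |\Omega| \le m .
\]
This field-size bound is exactly the one in the statement. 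So the main work is to translate \eqref{mds} into this condition.

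The translation is a counting argument. Suppose the condition fails for some $\Omega$. Then $T:=\bigcap_{i\in\Omega}Z_i$ has $|T|\ge m-|\Omega|+1$. Choose $R\subseteq T$ with $|R|=m-|\Omega|+1\le m$. Every $j\in R$ has $M_{ij}=0$ for all $i\in\Omega$, so $S(R)\subseteq[m]\setminus\Omega$. Hence $|S(R)|\le m-|\Omega|<|R|$, which contradicts \eqref{mds}. The single-row condition is the case $\Omega=\{i\}$. Here we need $R\subseteq[n]$ to exist, i.e.\ $m-|\Omega|+1\le n$. This holds because $T\subseteq[n]$ and we only pick $R$ when $|T|$ is that large. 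We also need $m\le n$ for an $[n,m]$ code. This follows from \eqref{mds} together with Hall: applying \eqref{mds} to subsets of size at most $m$ gives $|S(R)|\ge|R|$ for all such $R$, and taking $R$ with $|R|=\min(m,n)$ is consistent only if a full dual dimension exists. In the degenerate case $n<m$ the statement is vacuous or trivial, so I would handle it separately.

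The expected main obstacle is bookkeeping rather than substance. One must match the exact form and field-size hypothesis of the GM-MDS theorem as cited: some versions use $q\ge n+k-1$ with $k$ the dimension of the generated code, which here is $m$. One must also confirm that the matrix produced is a genuine GRS generator, namely $H=A\,V\,\mathrm{diag}(v)$ with $V$ Vandermonde and $A$ invertible, so that the row span is a GRS code even though the particular rows are sparse. Finally, invoking the standard fact that the dual of $\mathrm{GRS}_m(\alpha,v)$ is $\mathrm{GRS}_{n-m}(\alpha,v')$ completes the argument. It also shows that $H$ has full rank $m$, matching $\rho(M)=m$ from Theorem~\ref{thm:dminopt}.
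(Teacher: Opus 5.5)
Your proposal is correct and follows the paper's route: you view $M$ as the support mask of an $m\times n$ generalized Reed--Solomon generator, invoke GM-MDS with dimension $m$ (hence $q\ge n+m-1$), and dualize, and your counting argument that \eqref{mds} implies $|\bigcap_{i\in\Omega}Z_i|+|\Omega|\le m$ fills in the step the paper only asserts. One small correction: \eqref{mds} does not force $m\le n$ (an all-ones mask with $m>n$ satisfies it), so your Hall-based remark is wrong; this is harmless, because $m\le n$ is already an implicit hypothesis of the statement, the code being required to have dimension $n-m$.
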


\begin{IEEEproof}
Regard $M$ as a generator support mask for an $[n,m]$ code: the column-expansion condition $|S(R)| \ge |R|$ for all $R \subseteq [n]$ with $|R| \le m$ is exactly the GM-MDS zero-pattern condition on the $m \times n$ generator matrix, since the allowed zero positions in each column of the generator are the complement of $S_j$.  The GM-MDS theorem~\cite{YildizGMMDS, LovettGMMDS} then yields a generalized Reed--Solomon $[n,m,n{-}m{+}1]_q$ code with generator obeying $M$ for $q \ge n+m-1$.  Its dual is again generalized Reed--Solomon, with parity-check matrix obeying $M$.
\end{IEEEproof}

When \eqref{mds} holds, the GM-MDS theorem provides a recipe: construct a generalized Reed--Solomon generator obeying $M$ and take its dual.  However, when $\dstar(M) < m+1$, the GM-MDS theorem does not directly yield generalized Reed--Solomon constructions satisfying the parity-check constraints.  A natural question is whether the sparse code can still be realized as a subcode of a generalized Reed--Solomon code; the next section formalizes this.

\section{Vandermonde Certificates}
\label{sec:certificates}
Theorem~\ref{thm:dminopt} guarantees that a filling of the mask achieving the optimal minimum distance \emph{exists} over sufficiently large fields, but the proof is non-constructive and relies on an exponentially large in $n$ field size.  In the generator-matrix setting, the GM-MDS theorem~\cite{YildizGMMDS, LovettGMMDS} shows that the optimal distance can be achieved by subcodes of generalized Reed--Solomon codes under support constraints. Subcodes of a well-understood parent family are desirable because they inherit both the minimum distance guarantee and the efficient decoding algorithms of the parent code. This motivates the following question:
\begin{center}
\emph{Can a sparse parity-check code always be realized as a subcode of a generalized Reed--Solomon code?}
\end{center}

\begin{definition}[Full-rank Vandermonde certificate]
\label{def:vand-cert}
Fix a target distance $d$ and set $r = d-1$.  A mask $M$ admits a \emph{full-rank generalized Vandermonde certificate of distance $d$} over $\bbF_q$ if there exist matrices $H \in \bbF_q^{m \times n}$ and $A \in \bbF_q^{r \times m}$ such that $H$ obeys $M$, $\rankop(H) = m$, $\rankop(A) = r$, and $AH = V$ for some Vandermonde matrix $V \in \bbF_q^{r \times n}$ with distinct evaluation points.
\end{definition}

\begin{remark}
    A generalized Reed--Solomon parity-check matrix has nonzero column multipliers; these can be absorbed into~$H$ by rescaling columns without changing the support pattern, so $V$ may be taken to be a plain Vandermonde matrix without loss of generality.
\end{remark}

\begin{lemma}[Certificate equivalence]
\label{lem:subcode}
Let $H \in \bbF^{m \times n}$ have full row rank and let $V \in \bbF^{r \times n}$.  Then $\ker(H) \subseteq \ker(V)$ if and only if there exists $A \in \bbF^{r \times m}$ with $V = AH$.  If $\rankop(V) = r$, then $\rankop(A) = r$.
\end{lemma}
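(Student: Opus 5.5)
The plan is to prove the two directions of the equivalence separately, using only linear algebra of row spaces, and then read off the rank statement from a rank inequality.

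\emph{($\Leftarrow$).} Suppose $V = AH$ for some $A \in \bbF^{r \times m}$. For any $x \in \ker(H)$ we have $Vx = A(Hx) = 0$, so $x \in \ker(V)$. This direction needs neither full row rank nor any hypothesis on $V$.

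\emph{($\Rightarrow$).} Suppose $\ker(H) \subseteq \ker(V)$. Taking orthogonal complements with respect to the standard bilinear form on $\bbF^n$ reverses the inclusion, giving $\rowsp(V) = \ker(V)^{\perp} \subseteq \ker(H)^{\perp} = \rowsp(H)$. The identities $\ker(H)^\perp = \rowsp(H)$ hold over any field, even when the form is degenerate on subspaces (e.g.\ in characteristic $p$). They follow by dimension count: $\rowsp(H) \subseteq \ker(H)^\perp$ is clear, and $\dim \ker(H)^\perp = n - \dim\ker H = \rankop H$. Hence each row $v_i$ of $V$ is a linear combination $v_i = \sum_{k} a_{ik} h_k$ of the rows $h_k$ of $H$. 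Collecting the coefficients $a_{ik}$ into $A$ gives $V = AH$. Full row rank of $H$ makes the rows $h_k$ a basis of $\rowsp(H)$, so $A$ is in fact uniquely determined; existence, however, does not need this.

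\emph{Rank claim.} If $\rankop(V) = r$, then $r = \rankop(AH) \le \rankop(A) \le r$, since $A$ has $r$ rows. Therefore $\rankop(A) = r$.

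The only step requiring care is the orthogonal-complement identity over a general (possibly finite) field. I would handle it by the dimension argument above rather than by appeal to an inner-product structure, which keeps the proof valid in every characteristic.
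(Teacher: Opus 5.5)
Your proof is correct and follows the same route as the paper's: the forward direction via $Vc = A(Hc) = 0$, and the converse via $\rowsp(V)\subseteq\rowsp(H)$ from nullspace--rowspace duality. You also check that this duality holds over arbitrary fields using a dimension count, and you supply the one-line rank argument $r=\rankop(AH)\le\rankop(A)\le r$, which the paper leaves implicit.
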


\begin{IEEEproof}
If $V = AH$ and $Hc = 0$, then $Vc = A(Hc) = 0$.  Conversely, $\ker(H) \subseteq \ker(V)$ implies $\rowsp(V) \subseteq \rowsp(H)$ by nullspace--rowspace duality, so $V = AH$ for some $A$.
\end{IEEEproof}

\section{A Full-Rank Mask with No Vandermonde Certificate}
\label{sec:counterexample}
We answer the question posed in Section~\ref{sec:certificates} in the negative.
We exhibit a mask that admits no generalized Vandermonde certificate (Definition~\ref{def:vand-cert}) of the optimal distance given by Theorem \ref{thm:dminopt} over any field.

Let $m=12$, $n=36$, $r=5$, $d=6$.  Index the rows by $L_1, \ldots, L_6, R_1, \ldots, R_6$ and the columns by ordered pairs $c_{ij}$ for $1 \le i,j \le 6$.
Define column $c_{ij}$ to be allowed exactly in rows $L_i$ and $R_j$:
\[
  S_{ij} = \{L_i, R_j\}.
\]
Equivalently, $M$ is the vertex-edge incidence matrix of $K_{6,6}$, the complete bipartite graph with $6$ left and right nodes.
With columns grouped by left vertex,
\[
M_{K_{6,6}} =
\left[
\begin{array}{cccc}
\mathbf{1}_6^\top & 0 & \cdots & 0\\
0 & \mathbf{1}_6^\top & \cdots & 0\\
\vdots & \vdots & \ddots & \vdots\\
0 & 0 & \cdots & \mathbf{1}_6^\top\\ \hline
I_6 & I_6 & \cdots & I_6
\end{array}
\right],
\]
where $\mathbf{1}_6$ is the all-one column vector and $I_6$ is the $6 \times 6$ identity matrix.
The first six rows are $L_1, \ldots, L_6$, and the last six rows are $R_1, \ldots, R_6$.

\begin{proposition}[Column expansion]
\label{prop:k66-expansion}
$\dstar(M_{K_{6,6}}) = 6$.
\end{proposition}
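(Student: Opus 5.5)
We need $\tau(M)=5$. A set $R$ of columns is a set of edges of $K_{6,6}$, and $S(R)$ is the set of vertices covered by those edges. So $\tau(M)$ is the largest $s$ such that every edge set of size at most $s$ covers at least as many vertices as it has edges. By Theorem~\ref{thm:dminopt}, $\dstar = \tau + 1$, so it suffices to show two things. First, every set of at most $5$ edges spans at least as many vertices as edges. Second, some set of $6$ edges spans fewer than $6$ vertices.

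For the violating set of size $6$, we take the edge set of the complete bipartite subgraph $K_{2,3}$. Choose the columns $c_{ij}$ with $i \in \{1,2\}$ and $j \in \{1,2,3\}$. This gives $6$ columns, and their row supports are $\{L_1,L_2,R_1,R_2,R_3\}$, which has size $5 < 6$. Hence $\tau(M) \le 5$.

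For the expansion bound, let $R$ be any edge set with $|R| \le 5$, and view it as a subgraph of $K_{6,6}$. Apply the bound edge by edge within each connected component. A connected graph with $v$ vertices and $e$ edges satisfies $e \le v$ unless it contains at least two independent cycles. Since $K_{6,6}$ is bipartite, its shortest cycle is a $4$-cycle.

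Suppose a component has $e > v$. Its cyclomatic number $e - v + 1$ is then at least $2$. A bipartite graph with two independent cycles needs at least $6$ edges. Indeed, the smallest such graph is $K_{2,3}$, or a theta graph on $5$ vertices, which has $6$ edges and $5$ vertices. Any bipartite graph with $e \le 5$ edges and $e > v$ would need $v \le 4$. But the bipartite graphs on at most $4$ vertices have at most $4$ edges (at most $K_{2,2}$), so $e \le v$, a contradiction.

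To make this fully clean, bound $e$ in terms of $v$ over bipartite graphs: on $v$ vertices a bipartite graph has at most $\lfloor v^2/4 \rfloor$ edges. For $v \le 4$ this gives $e \le v$. For $v \ge 5$, any $R$ with $|R| \le 5$ already has $|R| \le 5 \le v$. Applying this to $S(R)$, whose size is the number of covered vertices $v$, gives $|S(R)| \ge |R|$ whenever $|R| \le 5$. The simplest argument is therefore direct: if $|S(R)| \le 4$, the edges lie within a bipartite graph on at most $4$ vertices, so $|R| \le \lfloor |S(R)|^2/4 \rfloor \le |S(R)|$. Otherwise $|S(R)| \ge 5 \ge |R|$.

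Thus $\tau(M) = 5$ and $\dstar(M_{K_{6,6}}) = 6$. The only delicate step is the small-case bound $\lfloor v^2/4 \rfloor \le v$ for $v \le 4$, which holds by checking $v = 1,2,3,4$ (giving $0,1,2,4$).
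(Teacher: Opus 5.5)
Your proof is correct and matches the paper's: the $K_{2,3}$ edge set $\{c_{ij}: i\in\{1,2\},\ j\in\{1,2,3\}\}$ covers only five vertices, so $\tau\le 5$. For $\tau\ge 5$, the bound $|R|\le\lfloor v^2/4\rfloor\le v$ handles $v\le 4$ covered vertices, and $v\ge 5\ge|R|$ handles the rest. The cyclomatic-number discussion is a harmless detour that your final direct argument makes unnecessary.
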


\begin{IEEEproof}
Any edge set of size $\le 5$ in $K_{6,6}$ spans at least as many vertices: a bipartite graph on $v\le4$ vertices has at most $\lfloor v^2/4\rfloor\le v$ edges.
The six edges $\{c_{ij}: i \in \{1,2\}, j \in \{1,2,3\}\}$ span only five vertices.
\end{IEEEproof}

\begin{proposition}[Full structural row rank]
\label{prop:k66-rho}
$\rho(M_{K_{6,6}}) = 12$.
\end{proposition}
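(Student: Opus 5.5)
We need to show $\rho(M_{K_{6,6}})=12$. Since $\rho(M)=\nu(G_M)$ by Hall's theorem (as recalled before Theorem~\ref{thm:dminopt}), and trivially $\rho(M)\le m=12$, it suffices to exhibit a matching of size $12$ in the bipartite graph $G_M$ between the $12$ rows $L_1,\dots,L_6,R_1,\dots,R_6$ and the $36$ columns $c_{ij}$. Equivalently, we need an injective assignment of a distinct column to each row such that each row is assigned a column in which it is allowed.

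The plan is to write the matching down explicitly. Match row $L_i$ to column $c_{ii}$ for $i=1,\dots,6$; this is allowed since $L_i\in S_{ii}$. Match row $R_j$ to column $c_{j+1,j}$, with indices taken mod $6$ (so $R_6\mapsto c_{1,6}$); this is allowed since $R_j\in S_{j+1,j}$. The diagonal columns $c_{ii}$ and the off-diagonal columns $c_{j+1,j}$ are disjoint, and within each family the columns are distinct, so all $12$ columns are distinct. This gives a matching of size $12$, hence $\nu(G_M)=12$.

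Alternatively, one could verify Hall's condition $|N(U)|\ge|U|$ for all $U\subseteq[m]$: each row of the mask has six allowed columns, so any nonempty $U$ has $|N(U)|\ge 6$. For $|U|\le 6$ this settles the condition immediately. For $|U|>6$, $U$ contains some $L_i$ and some $R_j$; the supports of $L_i$ and $L_{i'}$ are disjoint, and $L_i$ and $R_j$ share exactly one column, giving $|N(U)|\ge 11\ge$ enough for small cases, with a general count completing the rest. The explicit matching is cleaner, so I would use it.

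If one also wants a concrete filling of rank $12$, note that the matching's $12\times 12$ minor is a permutation pattern. Setting those entries to $1$ and all others to $0$ yields a filling $H$ with $\rankop(H)=12$, consistent with the symbolic-determinant argument in the proof of Theorem~\ref{thm:dminopt}. There is no real obstacle here; the only care needed is checking the distinctness of the matched columns, which the diagonal/off-diagonal split makes immediate.
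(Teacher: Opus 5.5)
Your proof is correct and is exactly the paper's argument: the same explicit matching $L_i \mapsto c_{ii}$, $R_j \mapsto c_{j+1,j}$ (indices mod $6$), with distinctness of the twelve columns and $\rho \le m = 12$ finishing the proof. Your alternative Hall's-condition sketch is left unfinished. That does not matter, since you rely on the explicit matching and the sketch is not needed.
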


\begin{IEEEproof}
Match $L_i \mapsto c_{ii}$ and $R_j \mapsto c_{j+1,j}$ (indices mod~$6$).
These twelve columns are distinct and each is incident to its matched row.
\end{IEEEproof}

Together, Propositions~\ref{prop:k66-expansion} and~\ref{prop:k66-rho} show that $\tau(M) = 5$ and $\rho(M) = m = 12$.
By Theorem~\ref{thm:dminopt}, there exists a $[36, 24, 6]_q$ code obeying $M_{K_{6,6}}$ over any sufficiently large field.
We now show that no such code can be realized as a subcode of a generalized Reed--Solomon code.

\begin{theorem}
\label{thm:no-cert}
$M_{K_{6,6}}$ admits no full-rank Vandermonde certificate of distance~$6$ over any field.
\end{theorem}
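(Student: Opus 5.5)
The plan is to reduce the existence of a certificate to an incompatible family of rational identities among the $36$ evaluation points. Suppose, for contradiction, that $H$ obeys $M_{K_{6,6}}$, has rank $12$, and satisfies $AH=V$, where $A$ is $5\times 12$ of rank $5$ and $V$ is Vandermonde with distinct points $x_{ij}$ attached to columns $c_{ij}$. Write $v(x)=(1,x,x^2,x^3,x^4)^\top$. Let $u_1,\dots,u_6,w_1,\dots,w_6\in\bbF^5$ be the columns of $A$ indexed by $L_1,\dots,L_6,R_1,\dots,R_6$. Column $c_{ij}$ of $H$ is $a_{ij}e_{L_i}+b_{ij}e_{R_j}$, so the certificate says exactly
\[
v(x_{ij}) = a_{ij}u_i + b_{ij}w_j \qquad (1\le i,j\le 6).
\]
The basic fact used throughout is that any $5$ distinct vectors $v(y)$ are independent. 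Consequently any $6$ of them satisfy a unique linear relation up to scale, and its coefficients are $\lambda/p'(y_k)$ with $p=\prod_k (x-y_k)$; in particular all coefficients are nonzero.

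First I will dispose of degenerate fillings. If some $a_{ij}=0$, then $v(x_{ij})\in\Span(w_j)$, and similarly if some $b_{ij}=0$. The row-rank condition on $H$ and the Kruskal structure of $V$ (any $5$ columns independent) will be used to show that only few such zeros can occur. For example, if $b_{ij}=b_{ij'}=0$ for $j\ne j'$, then $v(x_{ij}),v(x_{ij'})$ are both multiples of $u_i$, which is impossible. By the $L/R$ symmetry of the mask, I can then choose two left indices $i,i'$ and enough right indices $j$ with all four entries $a_{ij},b_{ij},a_{i'j},b_{i'j}$ nonzero.

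The main step eliminates $w_j$. For such $j$, set
\[
z_j := b_{i'j}v(x_{ij})-b_{ij}v(x_{i'j}) = b_{i'j}a_{ij}u_i - b_{ij}a_{i'j}u_{i'} \in \Span(u_i,u_{i'}),
\]
so all the $z_j$ lie in a space of dimension at most $2$. Any three of them, $z_{j_1},z_{j_2},z_{j_3}$, satisfy a nontrivial relation. Expanded, that relation is a relation among six distinct Vandermonde vectors, so it must be the unique one. Comparing the coefficients on $v(x_{ij})$ and $v(x_{i'j})$ for a fixed $j$ in the triple shows that
\[
b_{i'j}/b_{ij} = -\frac{p_T'(x_{i'j})}{p_T'(x_{ij})}
\]
for every triple $T\ni j$. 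The left side does not depend on $T$. Cancelling the common factor coming from the pair $\{x_{ij},x_{i'j}\}$, this says that the product over the two other indices $k\in T\setminus\{j\}$ of
\[
g_j(k) := \frac{(x_{i'j}-x_{ik})(x_{i'j}-x_{i'k})}{(x_{ij}-x_{ik})(x_{ij}-x_{i'k})}
\]
is the same for all pairs. Comparing the pairs $\{k,l\}$ and $\{k',l\}$ then forces $g_j(k)$ to be independent of $k\ne j$.

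The hardest step will be turning this family of identities into a contradiction with the distinctness of the $36$ points. My first attempt is as follows. For fixed $j$, the identity $g_j(k)=c_j$ says that every pair $(x_{ik},x_{i'k})$ with $k\ne j$ is a common root of
\[
(x_{i'j}-s)(x_{i'j}-t)-c_j(x_{ij}-s)(x_{ij}-t)=0.
\]
This is a symmetric curve of low degree in $(s,t)$. Letting $j$ range over several values, at least five points $(x_{ik},x_{i'k})$ must lie on several such curves simultaneously. Two distinct members of this family meet in at most a bounded number of points by B\'ezout. I will then show that either these curves coincide, which forces $x_{ij}=x_{ij'}$ or a similar coincidence violating distinctness, or they share too few points for five (or more) distinct pairs. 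If this bookkeeping proves too delicate, my fallback is to repeat the elimination with the roles of $L$ and $R$ swapped, obtaining analogous constraints $w_j,w_{j'}$, and to combine both families. A last resort is to exploit the explicit structure $c_j=-b_{i'j}/b_{ij}$ to obtain a polynomial identity in the $x$'s that can be checked to be inconsistent.
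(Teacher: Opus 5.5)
\textbf{There is a genuine gap in your final step.} Your elimination of $w_j$ is sound, modulo the degenerate-case bookkeeping. It is even an elementary substitute for the paper's conic argument in Lemma~\ref{lem:algebraic-projection}. But look at what it yields. Each $C_j$ is the zero set of a nonzero symmetric form $\alpha_j st+\beta_j(s+t)+\gamma_j$. Such a curve containing three disjoint off-diagonal pairs is irreducible, and two distinct irreducible ones share at most one unordered pair $\{p,q\}$ with $p\neq q$. Hence all the $C_j$ coincide with the graph of a single nonidentity M\"obius involution $\sigma_{ii'}$. Your identities $g_j(k)=c_j$ say exactly that $x_{i'k}=\sigma_{ii'}(x_{ik})$ for all $k$.

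So your B\'ezout dichotomy always lands in the ``curves coincide'' branch, and that branch forces no collision of evaluation points. No argument confined to one pair of left rows can succeed, because the $K_{2,6}$ sub-configuration is realizable. In characteristic $\neq 2$, take $x_{ik}=s_k$ and $x_{i'k}=-s_k$ with $\pm s_1,\dots,\pm s_6$ nonzero and distinct, $u_i=e_2$, $u_{i'}=e_4$, and $w_k=(1,-s_k,s_k^2,s_k^3,s_k^4)^\top$. Then $v(s_k)=2s_ku_i+w_k$ and $v(-s_k)=-2s_k^3u_{i'}+w_k$, and one checks $g_j(k)=1$ for all $j\neq k$. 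So neither the B\'ezout bookkeeping nor the ``last resort'' polynomial identity for a fixed $(i,i')$ can give a contradiction.

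\textbf{The missing idea} is to combine the involutions over \emph{all} pairs of left rows, as the paper does. A M\"obius map is determined by its values at three points, and $x_{11},\dots,x_{16}$ are distinct. So setting $g_1=\id$ and $g_i=\sigma_{1i}$ forces $\sigma_{ik}=g_kg_i^{-1}$. The six distinct maps $g_i$ then have involutive pairwise quotients, hence commute and lie in an elementary abelian $2$-subgroup of $\PGL_2(\F)$.
\begin{itemize}
\item In characteristic $\neq 2$, such a subgroup has at most four elements, a contradiction (Lemma~\ref{lem:mob-quotients}).
\item In characteristic $2$, this argument fails outright: after a change of variable the $g_i$ are translations $t\mapsto t+\alpha_i$, and there are arbitrarily many of these. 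A further input is needed. With $\alpha=\alpha_i+\alpha_k$, translation invariance forces the coordinate polynomials of the projection away from $\Span(u_i,u_k)$ to span $\Span\{1,\,t^2+\alpha t,\,t^4+\alpha^2t^2\}$. This space has no $t^3$ term, so $(0,0,0,1,0)^\top\in\Span(u_i,u_k)$ for every pair. That puts all $u_i$ in one plane, and $v(x_{1j}),\dots,v(x_{6j})$ in a $3$-dimensional space, contradicting moment-curve independence.
\end{itemize}
Your proposal never distinguishes characteristics. Your fallback of swapping $L$ and $R$ supplies neither the group-theoretic step nor the characteristic-$2$ step.
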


\begin{IEEEproof}[Proof sketch]
Assume for contradiction that a full-rank Vandermonde certificate exists. It suffices to rule out certificates over an algebraically closed field.  After absorbing the nonzero column multipliers into $H$, write
\[
  \nu(t)=(1,t,t^2,t^3,t^4)^T,
  \qquad
  \nu(t_{ij})=x_{ij}a_i+y_{ij}b_j,
\]
where the $36$ values $t_{ij}$ are distinct and $a_i,b_j\in\mathbb{F}^5$ are the columns of $A$ indexed by the left and right vertices of $K_{6,6}$. This would imply that the vectors $\nu(t_{ij})$ for distinct evaluation points lie in the span of $a_i$ and $b_j$.

Fix $i\ne k$ and project away from $\operatorname{span}(a_i,a_k)$ by choosing $W\in\mathbb{F}^{3\times 5}$ with kernel $\operatorname{span}(a_i,a_k)$ and set $q(t)=W\nu(t)$.  Then $q(t_{ij})$ and $q(t_{kj})$ are proportional for all $j$.  The key algebraic lemma, proved in Appendix~\ref{app:k66-proof}, says that these six proportionalities force the existence of fractional-linear involutions $\tau_{ik}\in\Mob(\mathbb{F})\cong\PGL_2(\mathbb{F})$ such that
\[
  t_{kj}=\tau_{ik}(t_{ij}),\qquad j=1,\ldots,6 .
\]

Analyzing this family of involutions demonstrates that it eventually forces too many of the vectors $\nu(t_{ij})$ to lie in a low-dimensional subspace. This violates the fact that any four such vectors with distinct evaluation points must remain linearly independent. The full algebraic details are given in Appendix~\ref{app:k66-proof}.
\end{IEEEproof}

The mask $M_{K_{6,6}}$ satisfies $\rho(M) = 12$ and $\dstar(M) = 6$.
Over large fields, generic methods produce an exact $[36, 24, 6]$ code obeying the mask.
Yet no such code can be realized as a subcode of a generalized Reed--Solomon code via a Vandermonde factorization.

\section{Structured Mask Optimization}
\label{sec:optimizing}
While the general condition of Theorem~\ref{thm:dminopt} is difficult to analyze, structured mask families admit sharper results.

\begin{definition}[Regular]
\label{def:regular}
A mask $M \in \{0,1\}^{m \times n}$ is $w$-\emph{regular} if each row has exactly $w$ ones.
\end{definition}

\begin{definition}[Balanced]
\label{def:balanced}
A $w$-regular mask is \emph{balanced} if each column has $\lfloor wm/n \rfloor$ or $\lceil wm/n \rceil$ ones.
\end{definition}

\begin{definition}[Cyclic]
\label{def:cyclic}
A mask is \emph{cyclic} if each row is a cyclic shift of the first row.
\end{definition}

A cyclic mask is always $w$-regular.  Regular masks simplify hardware since each parity check uses exactly $w$ terms.  Balanced masks distribute measurements evenly---important in quantum systems where repeated measurement introduces noise~\cite{RuizQuantum, SaboQuantum}.

\begin{proposition}
\label{prop:reg-mds}
For any $w \geq n-m+1$ and every prime power $q \ge n+m-1$, there exist generalized $[n,n{-}m]_q$ Reed--Solomon codes satisfying a $w$-regular, balanced, and cyclic mask.
\end{proposition}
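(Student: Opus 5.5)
The plan is to reduce the statement to Proposition~\ref{prop:mds}. It suffices to exhibit one explicit mask $M\in\{0,1\}^{m\times n}$ that is $w$-regular, balanced and cyclic, and that satisfies the MDS expansion condition \eqref{mds}. Proposition~\ref{prop:mds} then supplies, for every prime power $q\ge n+m-1$, a generalized Reed--Solomon $[n,n{-}m]_q$ code whose parity-check matrix obeys $M$. Throughout we assume $w\le n$ and $m\le n$, since otherwise no $w$-regular mask exists or the code is trivial.

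\textbf{The mask.} Take the first row to be the indicator of the cyclic window $\{1,\dots,w\}\subseteq\mathbb{Z}_n$. For the $m$ rows, choose the shift amounts
\[
  s_i=\left\lfloor \frac{(i-1)n}{m}\right\rfloor,\qquad i=1,\dots,m,
\]
so that row $i$ is supported on the cyclic interval $\{s_i+1,\dots,s_i+w\}$.

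\textbf{Regular and cyclic.} By construction each row is a cyclic shift of the first, so $M$ is cyclic and each row has exactly $w$ ones.

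\textbf{Balanced.} Column $j$ is covered by row $i$ exactly when $s_i$ lies in the cyclic window $\{j-w,\dots,j-1\}$ of length $w$. The points $s_i$ are an equally spaced discretization (a Beatty-type sequence with gap $n/m$), so every cyclic window of length $w$ contains either $\lfloor wm/n\rfloor$ or $\lceil wm/n\rceil$ of them. Hence every column weight takes one of these two values. I would verify this by the standard counting of $\{i: s_i\in[a,a+w)\}$ as $\lceil (a+w)m/n\rceil-\lceil am/n\rceil$, which lies between $\lfloor wm/n\rfloor$ and $\lceil wm/n\rceil$.

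\textbf{Expansion condition \eqref{mds}.} The key observation is that $w\ge n-m+1$ means each row has at most $n-w\le m-1$ zeros. Fix $R\subseteq[n]$ with $|R|\le m$. Every row outside $S(R)$ must vanish on all of $R$, so each such row has at least $|R|$ zeros. Now split into two cases.
\begin{itemize}[label={}]
\item \emph{Case $|R|>n-w$.} No row can vanish on all of $R$, so $S(R)=[m]$ and $|S(R)|=m\ge|R|$.
\item \emph{Case $|R|\le n-w$.} Here a finer count is needed: we must bound how many rows have their zero set (a cyclic interval of length $n-w$) containing $R$. Such a row's zero interval must contain the cyclic span of $R$, which has length $\ell\ge|R|$. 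So its shift $s_i$ is confined to a window of at most $n-w-\ell+1$ consecutive positions. By the equal spacing, that window contains at most $\lceil (n-w-\ell+1)m/n\rceil$ of the shifts $s_i$. Therefore
\[
  |S(R)|\ \ge\ m-\left\lceil \frac{(n-w-|R|+1)m}{n}\right\rceil .
\]
Using $n-w+1\le m$ and $m\le n$, the right-hand side is at least $|R|$.
\end{itemize}

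\textbf{Main obstacle.} The step I expect to be hardest is the rounding arithmetic in the second case. The inequality $m-\lceil (n-w-|R|+1)m/n\rceil\ge|R|$ should follow because $(n-w-|R|+1)m/n\le (m-|R|)m/n\le m-|R|$ whenever $m\le n$, though the ceiling needs care.

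If that floor/ceiling bookkeeping fails for the Beatty spacing, I would fall back to a simpler route. The case $w=n-m+1$ is the critical one, since adding ones only enlarges each $S(R)$ and therefore preserves \eqref{mds}. For that case I would use the clean shifts $s_i=i-1$ with $m=n$, or check the interval-counting directly.
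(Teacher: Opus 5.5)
Your route differs from the paper's, which simply dualizes the sparse balanced Reed--Solomon generator constructions of \cite{halbawisparse} via Proposition~\ref{prop:mds}. Building the mask explicitly for every $w$ and checking \eqref{mds} directly is a legitimate self-contained alternative. Your balance argument (counting integers in a half-open interval of length $wm/n$, with the shifts extended $n$-periodically to handle wrap-around) is correct.

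\textbf{The gap.} It lies in the second case of the expansion check. You claim that a row avoiding $R$ has its shift confined to a single window of $n-w-\ell+1$ positions, with $\ell$ ``the'' cyclic span of $R$. This fails whenever $R$ has two or more gaps (maximal cyclic runs of columns outside $R$) of length at least $w$. The row's support interval may lie in any of those gaps, and each gap contributes its own window. For example, take $n=10$, $m=9$, $w=2=n-m+1$, so the shifts are $0,1,\dots,8$, and let $R=\{1,6\}$. The six rows with supports $\{2,3\},\{3,4\},\{4,5\},\{7,8\},\{8,9\},\{9,10\}$ all avoid $R$, while your bound with $\ell=6$ allows only $\lceil 3\cdot 9/10\rceil=3$. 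Your final displayed inequality happens to survive here, but not by your argument. In general, the number of equally spaced points in a union of several windows is not bounded by $\lceil(\text{total length})\,m/n\rceil$. Conversely, the step you flagged as the main obstacle is harmless: $(n-w-|R|+1)m/n\le m-|R|$, and $m-|R|$ is an integer.

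\textbf{The repair.} It needs no Beatty spacing at all. Take $R$ nonempty with gap lengths $L_g$, so $\sum_g L_g=n-|R|$. A row is outside $S(R)$ exactly when its support interval fits inside some gap, so the number of admissible shift values is $\sum_g\max(0,L_g-w+1)$. This is $0$ if no gap has length at least $w$. Otherwise it is at most $(n-|R|)-(w-1)\le m-|R|$, because $w-1\ge n-m$. Your $m$ shifts are distinct (since $n/m\ge 1$), so at most $m-|R|$ rows avoid $R$, giving $|S(R)|\ge|R|$. This also absorbs your first case. So distinctness of the shifts alone gives \eqref{mds}, and the equal spacing is needed only for balance.

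Drop the fallback: $m$ is fixed by the statement, so ``use $m=n$'' is not available.
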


\begin{IEEEproof}
The dual codes of the sparse Reed--Solomon constructions in~\cite{halbawisparse} satisfy these constraints by Proposition~\ref{prop:mds}.
\end{IEEEproof}

\subsection{Numerical Results}

Figures~\ref{fig:regular}--\ref{fig:cycbal} display the distance
\begin{equation}
    \begin{split}
    D_{\mathrm{fr}}(m,n,w) := \max\{ \dstar(M) :\;&
    M \in \mathcal{M}_{\mathrm{reg}}(w),\\
    & \rho(M)=m\},
    \end{split}
\end{equation}
computed by exhaustive search over all mask patterns for $n = 9$ under regular and cyclic constraints, and for $n = 16$ under balanced-cyclic constraints.  Each mask defines an $[n,n{-}m]$ code over sufficiently large fields by Theorem \ref{thm:dminopt}.

\begin{figure}[ht]
    \centering
    \includegraphics[width=0.52\linewidth]{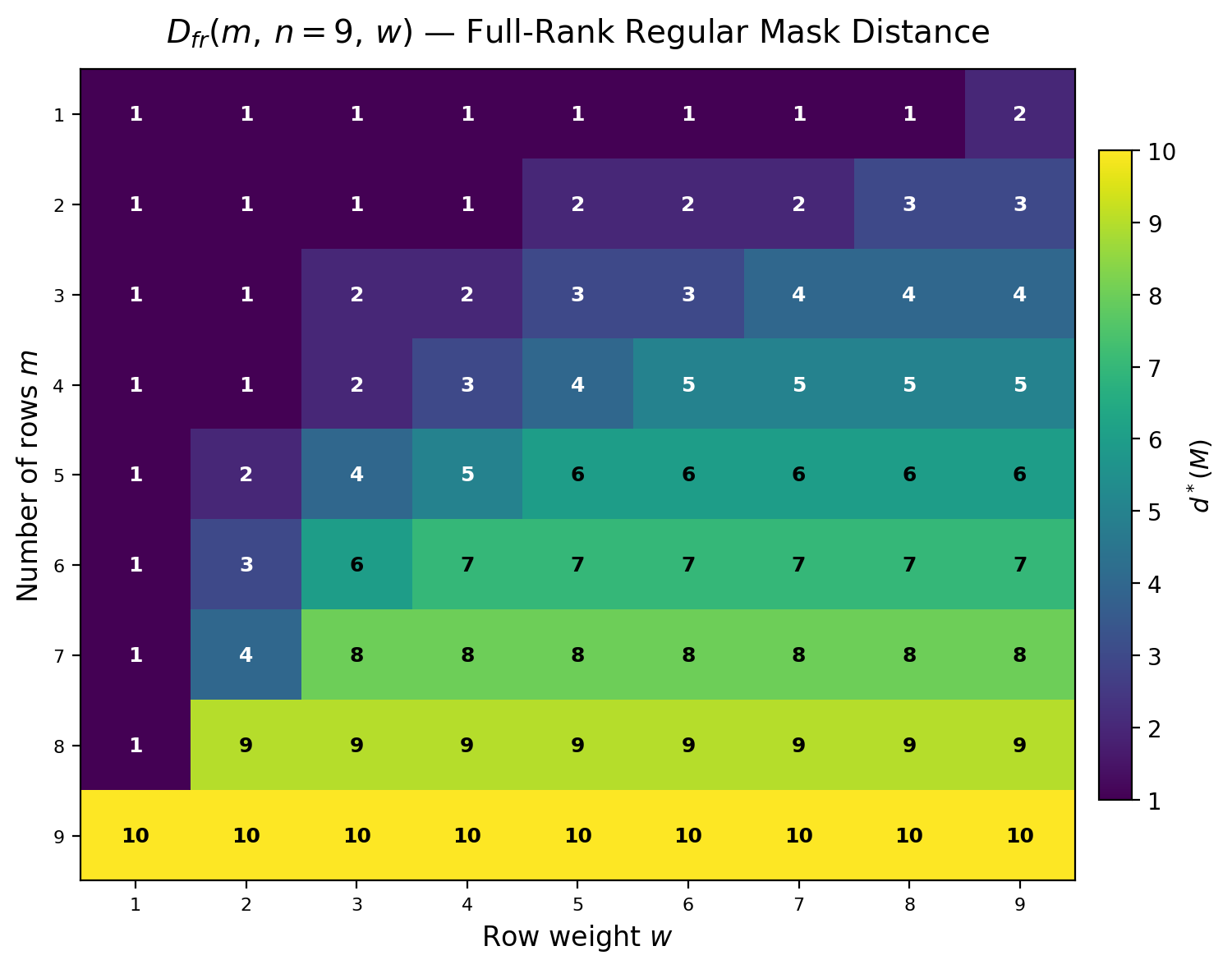}
    \caption{Full-rank $w$-regular mask distance $D_{\mathrm{fr}}(m,9,w)$.}
    \label{fig:regular}
\end{figure}

\begin{figure}[ht]
    \centering
    \includegraphics[width=0.52\linewidth]{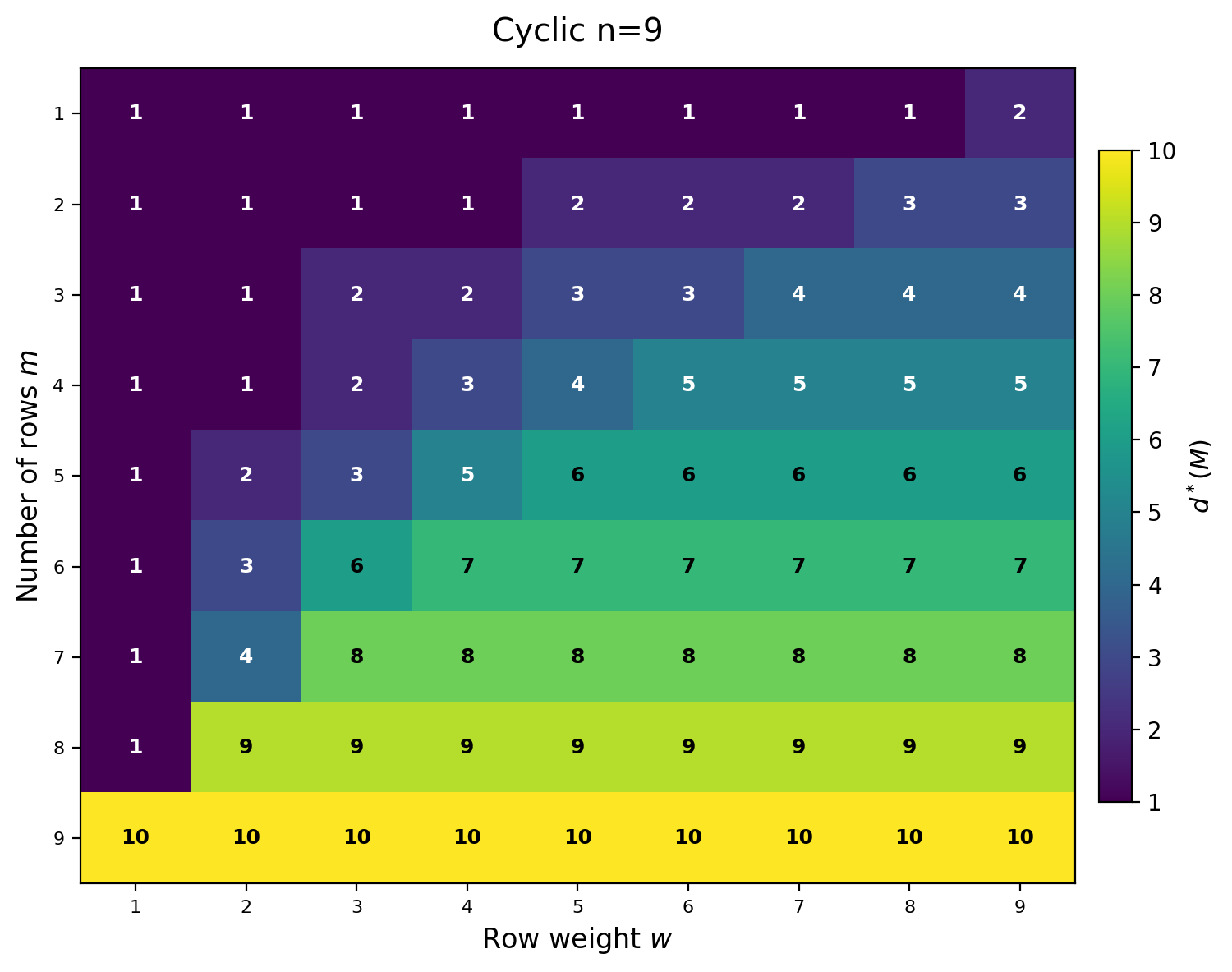}
    \caption{Full-rank cyclic mask distance for $n=9$.  Comparing with Figure~\ref{fig:regular}, the cyclic optimum at $(m,w)=(5,3)$ is only $3$ versus $4$.}
    \label{fig:cyclic}
\end{figure}

\begin{figure}[ht]
    \centering
    \includegraphics[width=0.7\linewidth]{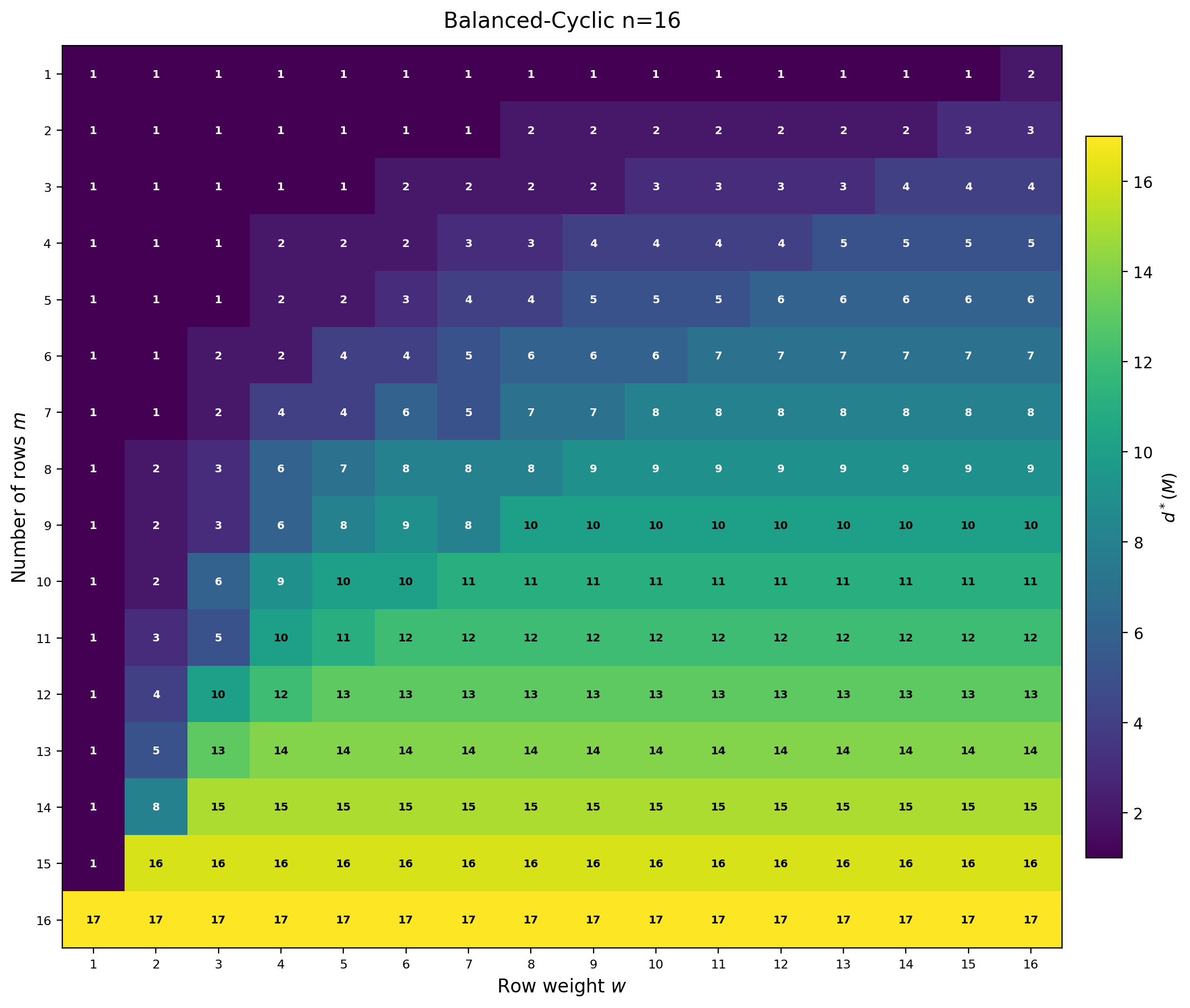}
    \caption{Full-rank balanced-cyclic mask distance for $n=16$.  The $m=9$ row exhibits non-monotonicity: $d_{\min}^{\star}(9,16,6)=9$ but $d_{\min}^{\star}(9,16,7)=8$.}
    \label{fig:cycbal}
\end{figure}

\subsection{Observations}

\begin{remark}[Insufficiency of cyclic masks]
\label{rem:cyclic}
Let $d_{\min}^{\star}(m,n,w)$ and $d_{\mathrm{cyclic}}^{\star}(m,n,w)$ denote the maximum expansion-bound distance over all $w$-regular masks and all cyclic masks, respectively.  There exist parameters for which $d_{\mathrm{cyclic}}^{\star}(m,n,w) < d_{\min}^{\star}(m,n,w)$.  For instance, Figures~\ref{fig:regular} and~\ref{fig:cyclic} show $d_{\mathrm{cyclic}}^{\star}(5,9,3) = 3$ while $d_{\min}^{\star}(5,9,3) = 4$.  Thus cyclic structure alone is not sufficient to maximize the expansion-bound distance.
\end{remark}

\begin{remark}[Non-monotonicity of balanced-cyclic masks]
\label{rem:nonmono}
The expansion-bound distance is not monotone in the row weight $w$.  Figure~\ref{fig:cycbal} exhibits this for balanced-cyclic masks at $n=16$: $d_{\min}^{\star}(9,16,6)=9$ but $d_{\min}^{\star}(9,16,7)=8$.
\end{remark}

\begin{definition}[Graph incidence expansion]
For a multigraph $G=(V,E)$ and $R\subseteq V$, let
\[
  E(R) := \{e\in E : e\cap R\ne\emptyset\}
\]
be the set of edges incident to at least one vertex of $R$, counted with multiplicity.  The incidence expansion of $G$, denoted $\eta(G)$, is
\begin{equation}
    \begin{split}
        \eta(G) = \max \left\{ \right. & s \in \mathbb{Z}_{\geq 0} : \ \forall R \subseteq V, \\
        & \left. 1 \le |R| \le s \implies |E(R)| \ge |R| \right\}.
    \end{split}
\end{equation}
\end{definition}

\begin{proposition}[$2$-regular masks are limited to repetition-code distance]
\label{prop:w2}
Let $m < n$.  A $2$-regular mask $M$ with $m$ rows and $n$ columns satisfies
\begin{equation}
  \dstar(M) \le \left\lfloor \frac{n}{n-m} \right\rfloor .
\end{equation}
Equality is achieved by a cyclic mask (Def. \ref{def:cyclic}).  In particular, $2$-regular parity-check constraints can never exceed the minimum distance of the $[n, n{-}m]$ block repetition code.
\end{proposition}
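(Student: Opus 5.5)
The plan is to read a $2$-regular mask as a multigraph and then apply the incidence-expansion notion $\eta$ defined just above. Let $G$ have vertex set $[n]$, the columns, and one edge $\{j,j'\}$ for each row $i$, where $j,j'$ are the two columns with $M_{ij}=M_{ij'}=1$. Then $S_j$ is exactly the set of edges incident to vertex $j$, and for $R\subseteq[n]$ we have $S(R)=E(R)$. Hence $\tau(M)=\eta(G)$ (capped at $n$), and the claim $\dstar(M)\le\lfloor n/(n-m)\rfloor$ is equivalent to exhibiting some $R$ with $|R|\le\lfloor n/(n-m)\rfloor$ and $|E(R)|<|R|$.

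\emph{Upper bound.} Decompose $G$ into connected components $C_1,\dots,C_c$ with $v_\ell$ vertices and $e_\ell$ edges. Define the excess $e_\ell-v_\ell$. Since each component is connected, the excess is at least $-1$, with equality exactly when $C_\ell$ is a tree; an isolated vertex, i.e.\ a column with empty support, counts as a tree. Summing over components gives $\sum_\ell (e_\ell-v_\ell)=m-n<0$. Every non-tree component contributes excess at least $0$, so the number $t$ of tree components satisfies $t\ge n-m$. These trees are vertex-disjoint, so the smallest one has at most $n/t\le n/(n-m)$ vertices, and since vertex counts are integers, at most $\lfloor n/(n-m)\rfloor$. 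Take $R$ to be the vertex set of this tree. Then $E(R)$ is precisely the edge set of the tree, so $|E(R)|=|R|-1<|R|$. Therefore $\tau(M)\le\lfloor n/(n-m)\rfloor-1$, and Theorem~\ref{thm:dminopt} gives the bound on $\dstar(M)$.

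\emph{Equality.} Set $s=n-m\ge1$ and let row $i$, for $i=1,\dots,m$, have its ones in columns $i$ and $i+s$. Since $i+s\le n$, no wraparound occurs, so each row is the first row $\{1,1+s\}$ shifted by $i-1$ positions, and the mask is cyclic in the sense of Definition~\ref{def:cyclic}. The resulting graph is a disjoint union of $s$ paths, one for each residue class modulo $s$ in $[n]$. Each path has at least $\lfloor n/s\rfloor$ vertices.

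It remains to check $|E(R)|\ge|R|$ whenever $|R|\le\lfloor n/s\rfloor-1$; this is the only point that needs care. Such an $R$ meets each path $P$ in a proper subset $U=R\cap P$, because $|U|\le|R|<\lfloor n/s\rfloor\le|P|$. In a path, a nonempty proper vertex subset $U$ has at least $|U|$ incident edges. To see this, root the path at a vertex outside $U$, which exists because $U$ is proper. Each vertex of $U$ is then not the root, so it has a distinct edge toward its parent, and that edge is incident to it. Summing over the disjoint paths gives $|E(R)|\ge|R|$, hence $\tau(M)\ge\lfloor n/s\rfloor-1$. Combined with the upper bound, this gives $\dstar(M)=\lfloor n/(n-m)\rfloor$ for this mask.

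Finally, the paths form a forest, so matching each edge to its child endpoint shows $\rho(M)=m$, and the mask therefore has full structural row rank as well. The $[n,n-m]$ block repetition code has minimum distance $\lfloor n/(n-m)\rfloor$, which gives the concluding remark of the statement.
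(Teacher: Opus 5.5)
Your proof is correct and follows essentially the same route as the paper: you use the column-vertex/row-edge multigraph with $\tau(M)=\eta(G)$, show there are at least $n-m$ tree components, take the smallest one by pigeonhole, and build a cyclic mask whose graph is $n-m$ balanced disjoint paths. Your excess count $\sum_\ell(e_\ell-v_\ell)=m-n$ makes the paper's ``each cycle closes at most one excess edge'' precise, and taking rows $\{i,i+s\}$ (paths along residue classes mod $s$) in place of deleting $n-m$ evenly spaced edges from the $n$-cycle is only a cosmetic variation, one that makes the cyclic property and the path lengths immediate.
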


\begin{IEEEproof}
A $2$-regular mask corresponds to a multigraph $G=(V,E)$ with $|V|=n$ and $|E|=m$, where columns are vertices and rows are edges.  Under this correspondence, $\tau(M)=\eta(G)$.  Any such graph has at least $n{-}m$ acyclic connected components (each cycle closes at most one excess edge).  Let $V_1,\ldots,V_{n-m}$ be the vertex sets of these components.  By pigeonhole, some $V^*$ satisfies $|V^*|\le\lfloor n/(n{-}m)\rfloor$.  Since $V^*$ is an entire acyclic component, its incident edge set equals its internal edge set, so $|E(V^*)|=|V^*|-1<|V^*|$.  Hence $\eta(G)\le\lfloor n/(n{-}m)\rfloor-1$ and $\dstar(M)\le\lfloor n/(n{-}m)\rfloor$.

For tightness, delete $n{-}m$ nearly evenly spaced edges from the $n$-cycle, leaving $n{-}m$ disjoint paths with balanced numbers of vertices.  For any $R$ with $|R|\le\lfloor n/(n{-}m)\rfloor-1$, partition $R_i=R\cap P_i$; since $R_i\subsetneq P_i$, each piece satisfies $|E(R_i)|\ge|R_i|$, so $|E(R)|\ge|R|$.  Thus $\dstar(M)=\lfloor n/(n{-}m)\rfloor$.  This mask is cyclic and has $\rho(M)=m$ (every forest has a matching saturating all edges).
\end{IEEEproof}

\begin{remark}[Conditions for $d_{\min}^{\star} \leq 2$]
\label{rem:dstar-small}
For $w$-regular masks: (1) $d_{\min}^{\star}(m,n,w) = 1$ if and only if $m < n/w$, since the $mw$ ones cannot cover all $n$ columns; (2) $d_{\min}^{\star}(m,n,w) = 2$ if and only if $n/w \leq m < 2n/(w{+}1)$, since a pigeonhole argument forces two weight-$1$ columns to share a row.
\end{remark}

\clearpage

\section{Conclusion}
\label{sec:conclusion}
We derived a tight expansion-based bound on the minimum distance of sparse parity-check codes and connected the MDS regime to the dual GM-MDS construction.  We introduced generalized Vandermonde certificates for realizing sparse codes as subcodes of generalized Reed--Solomon codes and showed via the $K_{6,6}$ counterexample that, unlike the generator-matrix setting, such certificates do not always exist.

Several directions remain open.  Determining tight necessary and sufficient conditions on the mask for the existence of a Vandermonde certificate is a natural next step; we have obtained explicit generalized Reed--Solomon subcode constructions under certain mask restrictions, which will appear in subsequent work.  The certificate framework itself generalizes to other structured code families.  Finally, the CSS construction~\cite{CalderbankCSS,SteaneCSS} converts pairs of dual-contained classical codes into quantum codes; extending the sparse parity-check framework to this setting is a promising direction for quantum codes with structured stabilizers.

\clearpage

\appendix
\section{Proof of Theorem~\ref{thm:no-cert}}
\label{app:k66-proof}
\subsection{Algebraic reduction}

We prove Theorem~\ref{thm:no-cert}.  It suffices to rule out certificates over an
algebraically closed field $\F$: if a certificate exists over a field $F$, then after
extending scalars to $\overline F$ the same equations hold, ranks are preserved, and the
Vandermonde evaluation points remain distinct.

Assume, for contradiction, that such a certificate exists.  Write the columns of $A$
corresponding to rows $L_i$ as $a_i$, and those corresponding to rows $R_j$ as $b_j$.
After absorbing the nonzero column multipliers into $H$, the Vandermonde columns have the
form
\[
\nu(t)=(1,t,t^2,t^3,t^4)^T.
\]
Since column $(i,j)$ of $H$ is supported only on rows $L_i$ and $R_j$, there are scalars
$x_{ij},y_{ij}\in \F$, not both zero, such that
\begin{equation}
\label{eq:basic-certificate}
\nu(t_{ij})=x_{ij}a_i+y_{ij}b_j,
\qquad 1\leq i,j\leq 6,
\end{equation}
where the $36$ evaluation points $t_{ij}$ are pairwise distinct.

\subsection{Algebraic preliminaries}

\begin{lemma}[Moment-curve independence]
\label{lem:moment-independence}
If $t_1,\ldots,t_s\in \F$ are distinct and $s\leq 5$, then
\[
\nu(t_1),\ldots,\nu(t_s)\in \F^5
\]
are linearly independent.  Consequently:
\begin{enumerate}[label=(\alph*)]
\item two distinct vectors of the form $\nu(t)$ are not proportional;
\item if $p_1,p_2,p_3,p_4$ are four distinct vectors of the form $\nu(t)$, then
\[
\Span(p_1,p_2)\cap \Span(p_3,p_4)=\{0\}.
\]
\end{enumerate}
\end{lemma}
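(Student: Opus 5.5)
The main claim follows from the Vandermonde determinant, and (a) and (b) are immediate corollaries.

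For the independence statement, first pad the list to exactly five distinct points. Since $\F$ is algebraically closed it is infinite, so if $s<5$ we can choose further points $t_{s+1},\ldots,t_5$ distinct from $t_1,\ldots,t_s$. The $5\times 5$ matrix with columns $\nu(t_1),\ldots,\nu(t_5)$ is a square Vandermonde matrix, with determinant $\prod_{1\le k<l\le 5}(t_l-t_k)\neq 0$ because the points are distinct. Its columns therefore form a basis of $\F^5$, and any subfamily of a linearly independent family is linearly independent. This gives the claim for every $s\le 5$. The argument works over any field containing five distinct points, but we only need it over the algebraically closed $\F$ fixed in the reduction.

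For (a), let $\nu(t)$ and $\nu(t')$ be distinct vectors of this form. Since the first coordinate of $\nu(t)$ is $t^0=1$, equal parameters give equal vectors, so distinct vectors force $t\ne t'$. The main claim with $s=2$ then makes them linearly independent, so they are not proportional. The vectors are also nonzero, again because the first coordinate is $1$.

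For (b), the four distinct vectors $p_\ell=\nu(t_\ell)$ come from four distinct parameters $t_1,\ldots,t_4$, by the same remark about the first coordinate. The main claim with $s=4$ shows $p_1,\ldots,p_4$ are linearly independent. Now suppose $v\in\Span(p_1,p_2)\cap\Span(p_3,p_4)$, say $v=\alpha p_1+\beta p_2=\gamma p_3+\delta p_4$. Then $\alpha p_1+\beta p_2-\gamma p_3-\delta p_4=0$. Independence forces all four coefficients to vanish, so $v=0$.

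There is no real obstacle here. The only point needing care is translating ``distinct vectors'' into ``distinct parameters'', which the first coordinate $1$ of $\nu$ handles. I expect the paper to use (b) later in the form that four vectors $\nu(t_{ij})$ with distinct evaluation points span a $4$-dimensional space. That is the contradiction the theorem's proof sketch aims to reach once the involutions $\tau_{ik}$ push too many such vectors into a low-dimensional subspace.
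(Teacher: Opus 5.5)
Your proof is correct and is essentially the paper's argument: both rest on the nonvanishing Vandermonde determinant, with (a) and (b) following immediately from independence. The only difference is that you pad to five points and use the full $5\times 5$ determinant, which needs $|\F|\ge 5$ (harmless here, since $\F$ is algebraically closed); the paper instead takes the leading $s\times s$ minor from the first $s$ rows, which works over any field with no padding.
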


\begin{IEEEproof}
The $s\times s$ submatrix obtained from the first $s$ rows is a Vandermonde matrix with
nonzero determinant
\[
\prod_{1\leq a<b\leq s}(t_b-t_a).
\]
Thus the $s$ columns are linearly independent.  The two consequences follow immediately.
\end{IEEEproof}

\begin{lemma}[Nonzero and pairwise independent left vectors]
\label{lem:left-independent}
In any hypothetical certificate, the six left-row vectors $a_1,\ldots,a_6\in \F^5$ are
nonzero and pairwise linearly independent.
\end{lemma}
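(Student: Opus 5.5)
The plan is to derive both claims directly from the basic relation \eqref{eq:basic-certificate} together with Lemma~\ref{lem:moment-independence}. We exploit the fact that each left vector $a_i$ enters six different relations, one for each $j$.

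\emph{Nonzero.} Suppose $a_i=0$. Then \eqref{eq:basic-certificate} gives $\nu(t_{ij})=y_{ij}b_j$ for $j=1,\ldots,6$, and each $b_j$ is nonzero because $\nu(t)\neq 0$ (its first coordinate is $1$). This alone is not a contradiction, since different $j$ involve different $b_j$. The way to conclude is to use the rank condition: $\rankop(A)=5$, while the columns of $A$ are $a_1,\ldots,a_6,b_1,\ldots,b_6$. A cleaner route is to look at a second row index $k\neq i$ and a column $j$. We have $\nu(t_{kj})=x_{kj}a_k+y_{kj}b_j$ and $b_j\propto\nu(t_{ij})$. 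So for every $j$, the vector $\nu(t_{kj})$ lies in $\Span(a_k,\nu(t_{ij}))$. Hence the seven vectors $\nu(t_{k1}),\ldots,\nu(t_{k6})$ and $\nu(t_{i1}),\ldots,\nu(t_{i6})$ would be contained in the span of $a_k$ together with the six vectors $\nu(t_{ij})$. That by itself is fine dimensionally, so the sharper argument is the following. Take $j\ne j'$. Then $\nu(t_{kj})-\lambda\nu(t_{ij})$ and $\nu(t_{kj'})-\lambda'\nu(t_{ij'})$ both lie in $\Span(a_k)$, provided the coefficient $x_{kj}$ is nonzero. If $x_{kj}=0$, then $\nu(t_{kj})\propto b_j\propto\nu(t_{ij})$, which contradicts part (a) of Lemma~\ref{lem:moment-independence}. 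So $x_{kj},x_{kj'}\neq0$, and eliminating $a_k$ gives a nontrivial linear relation among the four distinct moment vectors $\nu(t_{kj}),\nu(t_{ij}),\nu(t_{kj'}),\nu(t_{ij'})$. This contradicts Lemma~\ref{lem:moment-independence} with $s=4$, so $a_i\ne0$.

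\emph{Pairwise independence.} Suppose $a_k=c\,a_i$ for some $i\neq k$, where both vectors are nonzero by the previous step. For each $j$, the vectors $\nu(t_{ij})$ and $\nu(t_{kj})$ then both lie in the plane $P_j:=\Span(a_i,b_j)$. First we rule out the degenerate cases. If $b_j\in\Span(a_i)$, then $\nu(t_{ij})$ and $\nu(t_{kj})$ are both multiples of $a_i$, contradicting part (a) of Lemma~\ref{lem:moment-independence}. So $P_j$ is $2$-dimensional and equals $\Span(\nu(t_{ij}),\nu(t_{kj}))$, because these two vectors are independent and lie in $P_j$. Consequently $a_i\in\Span(\nu(t_{i1}),\nu(t_{k1}))\cap\Span(\nu(t_{i2}),\nu(t_{k2}))$. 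These four moment vectors are distinct, so part (b) of Lemma~\ref{lem:moment-independence} makes this intersection $\{0\}$, which forces $a_i=0$, a contradiction.

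The only delicate point is the case analysis on vanishing coefficients $x_{ij},y_{ij}$ and on degenerate planes. Each such case collapses two distinct moment vectors onto a common line, which part (a) of Lemma~\ref{lem:moment-independence} excludes. I would handle the nonzero claim in the same way as the independence claim: if $a_i=0$, then $\nu(t_{ij})\propto b_j$, so $b_j\in\Span(\nu(t_{ij}))$. For $k\ne i$ this gives $a_k\in\Span(\nu(t_{k1}),\nu(t_{i1}))\cap\Span(\nu(t_{k2}),\nu(t_{i2}))=\{0\}$, using $x_{k1}\neq0$ as argued above. Hence $a_k=0$ for every $k$. But then every column $\nu(t_{kj})$ is a multiple of $b_j$, so $\nu(t_{1j})\propto\nu(t_{2j})$, which contradicts part (a).
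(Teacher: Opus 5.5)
Your proof is correct and follows essentially the paper's argument: in both parts you show that the relevant left vector would lie in $\Span(\nu(t_{ij}),\nu(t_{kj}))$ for two different $j$, or equivalently you eliminate it to get a nontrivial dependence among four distinct moment vectors, contradicting Lemma~\ref{lem:moment-independence}. In the nonzero case, the elimination gets its nontriviality from $x_{kj'}\neq 0$, so it does not need the paper's preliminary split on whether a second $a_k$ vanishes; for a final write-up, delete the abandoned exploratory remarks (the rank-of-$A$ detour, the ``seven vectors'' sentence, and the misplaced proviso on $x_{kj}$).
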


\begin{IEEEproof}
Use the notation of \eqref{eq:basic-certificate}.
First suppose $a_i=0$ for some $i$.  Then each $\nu(t_{ij})$ is a nonzero scalar multiple of
$b_j$.  If also $a_k=0$ for some $k\neq i$, then $\nu(t_{ij})$ and $\nu(t_{kj})$ are
proportional for every $j$, contradicting Lemma~\ref{lem:moment-independence}.  Hence
choose $k\neq i$ with $a_k\neq 0$.

For each $j$, the two independent vectors $\nu(t_{ij})$ and $\nu(t_{kj})$ lie in the same
two-dimensional space $\Span(a_k,b_j)$; if $\Span(a_k,b_j)$ were one-dimensional, they
would be proportional.  Thus they form a basis of $\Span(a_k,b_j)$, and in particular
\[
a_k\in \Span(\nu(t_{ij}),\nu(t_{kj}))
\]
for all six values of $j$.  Taking two different values of $j$ gives two spans of four
distinct vectors of the form $\nu(t)$ meeting in the nonzero vector $a_k$, contradicting
Lemma~\ref{lem:moment-independence}.  Therefore every $a_i$ is nonzero.

Now suppose $a_i$ and $a_k$ are proportional for some $i\neq k$, and let $a$ be a common
nonzero vector spanning them.  For each $j$, the two independent vectors
$\nu(t_{ij})$ and $\nu(t_{kj})$ lie in $\Span(a,b_j)$.  As above, they form a basis of this
space, so
\[
a\in \Span(\nu(t_{ij}),\nu(t_{kj}))
\]
for all $j$.  Again two different values of $j$ contradict Lemma~\ref{lem:moment-independence}.
Hence $a_i$ and $a_k$ are linearly independent.
\end{IEEEproof}

\begin{lemma}[Algebraic normal form for irreducible ternary quadratics]
\label{lem:quadratic-normal-form}
Let $Q(X,Y,Z)\in \F[X,Y,Z]$ be an irreducible homogeneous polynomial of degree $2$.
Then there are independent linear forms $L_1,L_2,L_3$ in $X,Y,Z$ such that
\[
Q=L_2^2-L_1L_3.
\]
In characteristic two, this is the same as $Q=L_2^2+L_1L_3$.
\end{lemma}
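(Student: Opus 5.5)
The plan is the classical diagonalization argument for quadratic forms, adapted so that it also works in characteristic two. $\F$ is algebraically closed, as fixed in the reduction. I would split on the characteristic and keep track of how many independent linear forms the quadratic actually involves.

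\emph{Step 1 (rank of $Q$).} Suppose $Q$ depended, after a linear change of variables, on at most two independent linear forms $u,v$. Then $Q=\alpha u^2+\beta uv+\gamma v^2$ is a binary quadratic form over an algebraically closed field, so it factors into linear forms. This contradicts irreducibility. Hence $Q$ essentially involves all three variables, and this is the only place irreducibility is used besides Step 2.

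\emph{Step 2 (characteristic not two).} Use the symmetric bilinear form attached to $Q$ and diagonalize, as in Gram--Schmidt, to get $Q=c_1U^2+c_2V^2+c_3W^2$ with $U,V,W$ independent. By Step 1, all $c_i\neq 0$. Since $\F$ is algebraically closed, absorb square roots to get $Q=U^2+V^2+W^2$. Now set $L_2=V$, $L_1=iW-U$ and $L_3=iW+U$, where $i^2=-1$. Then
\[
L_2^2-L_1L_3=V^2-(iW-U)(iW+U)=V^2+W^2+U^2 .
\]
The forms $L_1,L_2,L_3$ are independent because $2\neq 0$.

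\emph{Step 3 (characteristic two).} This is the main obstacle, because the bilinear form is alternating and we cannot diagonalize. Instead, use the polar form $B(x,y)=Q(x+y)-Q(x)-Q(y)$. It is alternating, so its rank on $\F^3$ is $0$ or $2$.
\begin{itemize}
\item If the rank is $0$, then $Q$ is additive and $Q=\sum c_iX_i^2=(\sum\sqrt{c_i}X_i)^2$, which is reducible. So the rank is $2$.
\item Then $B$ has a one-dimensional radical. Choose a basis $e_1,e_2,e_3$ in which $e_2$ spans the radical and $B(e_1,e_3)=1$.
\item In the dual coordinates $Y_1,Y_2,Y_3$, this gives $Q=aY_1^2+Y_1Y_3+cY_3^2+bY_2^2$.
\item We must have $b\neq0$: otherwise $Q$ depends only on $Y_1,Y_3$ and Step 1 gives a contradiction. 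Set $L_2=\sqrt b\,Y_2$.
\item The binary form $aY_1^2+Y_1Y_3+cY_3^2$ factors over $\F$ as $L_1L_3$. Since its discriminant-type coefficient of $Y_1Y_3$ is $1\neq 0$, the two factors $L_1,L_3$ are independent linear forms in $Y_1,Y_3$.
\end{itemize}
Therefore $Q=L_2^2+L_1L_3$ with $L_1,L_2,L_3$ independent. Since $-1=1$ in characteristic two, this is the same as $L_2^2-L_1L_3$.
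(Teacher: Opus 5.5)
Your proof is correct, but it takes a different route from the paper's. The paper gives one characteristic-free argument:
\begin{itemize}
\item it picks an isotropic vector $p$ and puts it first in a basis, so that $\widetilde Q = X\ell(Y,Z)+q(Y,Z)$ has no $X^2$ term;
\item it uses irreducibility twice, to get $\ell\neq0$ and then $a\neq0$ in $XZ+aY^2+bYZ+cZ^2$;
\item it finishes by rescaling $Y$ and substituting $X\mapsto X-bY-cZ$ to reach $Y^2+XZ$.
\end{itemize}
You instead use the structure theory of quadratic forms and split on the characteristic. In odd characteristic you diagonalize and rewrite $U^2+V^2+W^2$ as $L_2^2-L_1L_3$ using $\sqrt{-1}$. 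In characteristic two you use that the polar form is alternating of rank $0$ or $2$, exclude rank $0$ because $Q$ would be a perfect square, and split off the radical direction. Your independence checks are right: the determinant is $-2i$ in odd characteristic, and in characteristic two the nonzero $Y_1Y_3$ coefficient rules out a square of a linear form. The paper's route buys uniformity and fewer prerequisites, since it needs no case split and no facts about diagonalizability or alternating forms. Yours is more conceptual and explains why characteristic two is special: the polar form has a one-dimensional radical even though $Q$ is irreducible. One small slip: irreducibility is also used in your Step 3, both to exclude rank $0$ and to get $b\neq0$. So the remark that Step 1 is ``the only place irreducibility is used besides Step 2'' is inaccurate.
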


\begin{IEEEproof}
First choose a nonzero vector $p\in \F^3$ with $Q(p)=0$.  This is elementary: restrict
$Q$ to any two-dimensional subspace.  If the restriction is the zero polynomial, any nonzero
vector in that subspace works.  Otherwise the restriction is a nonzero binary homogeneous
quadratic, which factors into linear forms over the algebraically closed field $\F$ and hence
has a nonzero zero.

Extend $p$ to a basis $p,u,v$ of $\F^3$.  In the corresponding coordinates, replace $Q$ by
\[
\widetilde Q(X,Y,Z):=Q(Xp+Yu+Zv).
\]
Since $\widetilde Q(1,0,0)=Q(p)=0$, there is no $X^2$ term.  Thus
\[
\widetilde Q=X\ell(Y,Z)+q(Y,Z),
\]
where $\ell$ is linear and $q$ is homogeneous quadratic in $Y,Z$.

If $\ell=0$, then $\widetilde Q=q(Y,Z)$ is a binary homogeneous quadratic, so it factors into
linear forms over $\F$, contradicting irreducibility.  Hence $\ell\neq 0$.  After an invertible
linear change in the variables $Y,Z$, assume $\ell=Z$.  Then
\[
\widetilde Q=XZ+aY^2+bYZ+cZ^2.
\]
If $a=0$, then
\[
\widetilde Q=Z(X+bY+cZ),
\]
again reducible.  Hence $a\neq 0$.  Since $\F$ is algebraically closed, rescale $Y$ so that
$a=1$.  Replacing $X$ by $X-bY-cZ$ gives
\[
\widetilde Q=Y^2+XZ.
\]
Equivalently, after replacing $X$ by $-X$ when the characteristic is not two,
\[
\widetilde Q=Y^2-XZ.
\]
Undoing the change of variables gives $Q=L_2^2-L_1L_3$ for independent linear forms
$L_1,L_2,L_3$.
\end{IEEEproof}

\begin{lemma}[Polynomial triples satisfying $F_2^2=F_1F_3$]
\label{lem:square-factorization}
Let $F_1,F_2,F_3\in \F[t]$ be linearly independent polynomials with
\[
\gcd(F_1,F_2,F_3)=1,
\qquad
F_2^2=F_1F_3.
\]
Then there are coprime polynomials $R,S\in \F[t]$ and nonzero scalars
$\lambda,\mu,\eta\in \F^\times$ such that
\[
F_1=\lambda R^2,
\qquad
F_2=\eta RS,
\qquad
F_3=\mu S^2,
\qquad
\eta^2=\lambda\mu.
\]
If $M_0=\max_i\deg F_i$ and $h=\max(\deg R,
\deg S)$, then
\[
2h\leq M_0,
\qquad
h\neq 0.
\]
\end{lemma}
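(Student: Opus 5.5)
The approach is unique factorization in $\F[t]$, using that $\F$ is algebraically closed. First, none of the $F_i$ vanishes. If $F_1=0$, then $F_2^2=0$ forces $F_2=0$, which contradicts linear independence. The same argument handles $F_3$, and $F_2=0$ forces $F_1F_3=0$. So all three are nonzero.

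Next, $F_1$ and $F_3$ are coprime. Suppose an irreducible $p$ divides both. Then $p\mid F_1F_3=F_2^2$, so $p\mid F_2$, contradicting $\gcd(F_1,F_2,F_3)=1$.

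Now factor $F_1F_3=F_2^2$. Every irreducible factor occurs to an even power in $F_2^2$, and each such prime divides exactly one of $F_1$ or $F_3$ by coprimality. Hence the full multiplicity of each prime lies in one factor and is even. So the monic part of $F_1$ is a square $R^2$ and the monic part of $F_3$ is $S^2$, with $R,S$ monic and coprime. Write $F_1=\lambda R^2$ and $F_3=\mu S^2$ with $\lambda,\mu\neq 0$. Then $F_2^2=\lambda\mu(RS)^2$, so $(F_2/(RS))^2$ is the constant $\lambda\mu$. Since $\F[t]$ is a UFD, $RS\mid F_2$ and $F_2=\eta RS$ with $\eta^2=\lambda\mu$, where $\eta$ is a square root of $\lambda\mu$, which exists because $\F$ is algebraically closed (or simply read $\eta$ off as the leading-coefficient ratio). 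Also $\eta\neq0$ because $F_2\ne0$.

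For the degree claims, $\deg F_1=2\deg R$ and $\deg F_3=2\deg S$, so $M_0\ge 2\max(\deg R,\deg S)=2h$. For $h\neq0$: if $\deg R=\deg S=0$, then $R$ and $S$ are constants (both $1$ since they are monic). Then $F_1,F_2,F_3$ are all scalars, hence lie in the one-dimensional space $\F$, contradicting linear independence of three polynomials.

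The only mildly delicate step is the coprimality/square argument. There I would phrase it via $p$-adic valuations: $2v_p(F_2)=v_p(F_1)+v_p(F_3)$, and one of the two terms on the right is zero.
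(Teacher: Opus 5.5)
Your proof is correct and follows essentially the same route as the paper: you show $F_1$ and $F_3$ are coprime via an irreducible common factor, use unique factorization to write $F_1=\lambda R^2$ and $F_3=\mu S^2$, then get $F_2=\eta RS$, and derive $2h\le M_0$ from degrees and $h\neq 0$ from linear independence. Your version is slightly more careful than the paper's: it works with an irreducible $p$ rather than an arbitrary common divisor, and it reads $\eta$ off directly from $RS\mid F_2$, which shows algebraic closure is not actually needed for this lemma.
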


\begin{IEEEproof}
The identity $F_2^2=F_1F_3$ and the condition $\gcd(F_1,F_2,F_3)=1$ imply
$\gcd(F_1,F_3)=1$.  Indeed, any common divisor of $F_1$ and $F_3$ divides $F_2^2$, hence
also divides $F_2$, contradicting the gcd condition.

Because $F_1F_3$ is a square and $F_1,F_3$ are coprime, unique factorization in $\F[t]$
gives
\[
F_1=\lambda R^2,
\qquad
F_3=\mu S^2
\]
for coprime $R,S$ and nonzero $\lambda,\mu\in \F^\times$.  Then
\[
F_2^2=\lambda\mu R^2S^2.
\]
Since $\F$ is algebraically closed, choose $\eta\in\F^\times$ with $\eta^2=\lambda\mu$;
then $F_2=\pm\eta RS$, and the sign may be absorbed into $\eta$.

The degree inequality follows because $F_1$ and $F_3$ have degrees $2\deg R$ and
$2\deg S$.  If $h=0$, then $R,S$ are constants, so all three $F_i$ are constant multiples of
one another, contradicting their linear independence.  Hence $h\neq0$.
\end{IEEEproof}

For the next lemma, write
\[
\widehat{\F}:=\F\cup\{\infty\}.
\]
A fractional linear transformation is a function on $\widehat\F$ of the form
\[
t\longmapsto \frac{\alpha t+\beta}{\gamma t+\delta},
\qquad
\alpha\delta-\beta\gamma\neq0,
\]
with the usual conventions at the pole and at $\infty$.  Let $\Mob(\F)$ denote this group;
equivalently, it is the action of $\PGL_2(\F)$ on $\widehat{\F}$.

\begin{lemma}[Quadratic quotients have a residual fractional-linear involution]
\label{lem:degree-two-involution}
Let $R,S\in \F[t]$ be coprime with
\[
\max(\deg R,\deg S)=2.
\]
Set
\[
N(T,U):=R(T)S(U)-R(U)S(T).
\]
Assume there exist distinct $t,u\in\F$ with $N(t,u)=0$.  Then there is a nonidentity
involution $\tau\in\Mob(\F)$ such that, whenever $p\neq q$ and $N(p,q)=0$, one has
$q=\tau(p)$.
\end{lemma}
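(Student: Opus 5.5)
The plan is to analyze the rational map $\phi=R/S:\widehat\F\to\widehat\F$, which has degree exactly $2$ because $R,S$ are coprime with $\max(\deg R,\deg S)=2$. For $p\neq q$ in $\F$, the condition $N(p,q)=0$ reads $R(p)S(q)=R(q)S(p)$. Since $R,S$ have no common root, $(R(p),S(p))\neq(0,0)$, so this is exactly the statement $\phi(p)=\phi(q)$ in $\widehat\F$. The task is therefore to show that the fibers of a degree-two rational map are the orbits of an involution in $\Mob(\F)$ whenever some fiber contains two distinct points.

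\textbf{Key steps.}

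1. \emph{Degree of $N$ in $U$.} Write $N(T,U)=R(T)S(U)-S(T)R(U)$ as a polynomial in $U$ with coefficients in $\F[T]$. Its degree in $U$ is at most $2$. The coefficient of $U^2$ is $r_2S(T)-s_2R(T)$, and the $U^1$ and $U^0$ coefficients are similar. Since $N(T,T)=0$, the polynomial $(U-T)$ divides $N$ in $\F[T][U]$. Hence
\[
N(T,U)=(U-T)\,(\alpha(T)U+\beta(T)),
\]
where $\alpha,\beta\in\F[T]$. Comparing degrees in $T$ (at most $2$) gives $\deg\alpha,\deg\beta\le 1$.

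2. \emph{Antisymmetry.} Since $N(T,U)=-N(U,T)$, the cofactor $K(T,U)=\alpha(T)U+\beta(T)$ satisfies $K(T,U)=K(U,T)$. So $K$ is a symmetric bilinear-type form
\[
K=aTU+b(T+U)+c,
\]
with $a,b,c\in\F$. It is nonzero because $N\neq0$, which follows from $R,S$ not being proportional: they are coprime and not both constant. The hypothesis gives distinct $t,u$ with $K(t,u)=0$.

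3. \emph{The involution.} For $p\neq q$ with $N(p,q)=0$, we get $K(p,q)=0$, i.e. $q(ap+b)=-(bp+c)$. Define
\[
\tau(p)=\frac{-bp-c}{ap+b}.
\]
This is symmetric in the sense that $\tau\circ\tau=\id$, the standard property of matrices $\begin{pmatrix}-b&-c\\a&b\end{pmatrix}$, which have trace zero. It is also not the identity, since the identity would require $a=c=0$ and $b$ arbitrary, giving $K=b(T+U)$.

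We then need $\det=-b^2+ac\neq0$. If $ac=b^2$, then $K$ factors as $(\sqrt a\,T+\sqrt c)(\sqrt a\,U+\sqrt c)$, up to the degenerate cases handled similarly. In that case $K(p,q)=0$ forces $p$ or $q$ to be a fixed point $p_0$. This gives $N(p_0,U)\equiv0$ in $U$, so $R(p_0)S(U)=S(p_0)R(U)$ identically. That contradicts non-proportionality of $R,S$, since $(R(p_0),S(p_0))\neq 0$.

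Similarly, in the case $K=b(T+U)$ with $b\ne0$, we get $q=-p$. This is the trace-zero matrix $\mathrm{diag}(-1,1)$, which is a genuine involution in characteristic $\neq2$. In characteristic $2$, $K=b(T+U)$ vanishes only on $T=U$, contradicting the existence of distinct $t,u$. So $\tau$ is a nonidentity involution, and $q=\tau(p)$ whenever $ap+b\neq0$.

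4. \emph{Pole.} If $ap+b=0$, then $K(p,q)=0$ forces $bp+c=0$ as well. Combined with $ap+b=0$, this makes the determinant vanish or makes $p$ a common root, contradicting step 3. So every such pair satisfies $q=\tau(p)$.

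\textbf{Main obstacle.} The delicate part is step 3: ruling out a degenerate (rank-$\le1$) symmetric form $K$, including the characteristic-two subtleties. The tool is coprimality of $R$ and $S$, via the observation that $N(p_0,\cdot)\equiv0$ forces $R$ and $S$ to be proportional.
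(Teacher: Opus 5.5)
Your proposal is correct and follows essentially the same route as the paper. It factors $N=(U-T)K$ with $K=aTU+b(T+U)+c$ symmetric, rules out $ac=b^2$ by finding $p_0$ with $N(p_0,\cdot)\equiv0$ (contradicting coprimality), and reads off $\tau(p)=-(bp+c)/(ap+b)$; your trace-zero argument for $\tau^2=\id$ replaces the paper's appeal to the symmetry of $K$, equally validly. The one case you pass over as ``handled similarly'' is $K$ a nonzero constant, where there is no root $p_0$, so it needs a different (one-line) argument: then $N(t,u)=c(u-t)\neq0$ for the given distinct pair, contradicting the hypothesis (the paper instead notes that $N\propto T-U$ forces $R/S$ to have degree at most one).
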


\begin{IEEEproof}
Since $N(T,T)=0$, the polynomial $T-U$ divides $N(T,U)$.  Write
\[
N(T,U)=(T-U)B(T,U).
\]
Because $R$ and $S$ have degree at most $2$, the quotient $B$ has degree at most $1$ in
each variable.  Also $N(U,T)=-N(T,U)$ and $U-T=-(T-U)$, so $B(U,T)=B(T,U)$.  Hence
\[
B(T,U)=aTU+b(T+U)+d
\]
for some $a,b,d\in\F$.

We claim $b^2-ad\neq0$.  If $b^2-ad=0$, then the symmetric bilinear polynomial $B(T,U)$
has rank at most one.  If $B$ is constant, then $N(T,U)$ is a scalar multiple of $T-U$,
which forces the rational function $R/S$ to have degree at most one; this contradicts
$\max(\deg R,\deg S)=2$ with $R,S$ coprime.  Otherwise, over the algebraically closed field
$\F$, the polynomial $B(T,U)$ is a nonzero scalar multiple of $\ell(T)\ell(U)$ for some
nonconstant linear polynomial $\ell$.  Choose $t_0$ with $\ell(t_0)=0$.  Then $B(t_0,U)=0$
for every $U$, so $N(t_0,U)=0$ for every $U$.  Thus
\[
R(t_0)S(U)-R(U)S(t_0)=0
\]
for every $U$.  Since $R$ and $S$ have no common root, $(R(t_0),S(t_0))\neq(0,0)$; the last
identity forces $R$ and $S$ to be scalar multiples, contrary to
$\max(\deg R,\deg S)=2$ and coprimality.  Therefore $b^2-ad\neq0$.

Define
\[
\tau(t):=-\frac{bt+d}{at+b}.
\]
The determinant condition $b^2-ad\neq0$ says that this is a fractional linear transformation.
For $p\neq q$, the equation $N(p,q)=0$ is equivalent to $B(p,q)=0$, and
\[
\begin{aligned}
B(p,q)=0
&\quad\Longleftrightarrow\quad
(ap+b)q+(bp+d)=0\\
&\quad\Longleftrightarrow\quad
q=\tau(p),
\end{aligned}
\]
with the usual interpretation when $ap+b=0$.

Finally $B(T,U)=B(U,T)$ implies
\[
q=\tau(p)\quad\Longleftrightarrow\quad p=\tau(q),
\]
so $\tau^2=\id$.  The assumed existence of a distinct pair $t\neq u$ with $N(t,u)=0$ implies
$u=\tau(t)\neq t$, so $\tau\neq\id$.
\end{IEEEproof}

\begin{lemma}[Characteristic-two translation invariant quadratics]
\label{lem:translation-invariant-quadratics}
Assume $\operatorname{char}\F=2$ and $a\in\F^\times$.  Let $R,S\in\F[t]$ be coprime
polynomials of degree at most $2$ satisfying
\[
R(t+a)S(t)-R(t)S(t+a)\equiv0.
\]
Then $R$ and $S$ both lie in the two-dimensional space
\[
\Span\{1,\ t^2+at\}.
\]
If $R,S$ are linearly independent, then
\[
\Span\{R^2,RS,S^2\}=\Span\{1,
\ t^2+at,
\ (t^2+at)^2\}.
\]
\end{lemma}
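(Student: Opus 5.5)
Write $\phi(t):=R(t+a)S(t)-R(t)S(t+a)$ and $\Delta_a f(t):=f(t+a)-f(t)$, which equals $f(t+a)+f(t)$ in characteristic two. The plan is a direct coefficient computation. It uses the fact that in characteristic two the difference operator $\Delta_a$ kills exactly the quadratics in $\Span\{1,t^2+at\}$. We reduce the identity $\phi\equiv0$ to the statement that $\Delta_a R$ and $\Delta_a S$ are proportional in a way that forces both to vanish.

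First compute $\Delta_a$ on a general quadratic $f=\alpha t^2+\beta t+\gamma$. Since $\operatorname{char}\F=2$ gives $(t+a)^2=t^2+a^2$, we have $\Delta_a f=\alpha a^2+\beta a$, a constant. So $\Delta_a f=0$ if and only if $\beta=\alpha a$, that is, $f\in\Span\{1,t^2+at\}$. Next rewrite the hypothesis. Substituting $R(t+a)=R(t)+c_R$ and $S(t+a)=S(t)+c_S$, where $c_R=\Delta_aR$ and $c_S=\Delta_aS$ are constants, gives
\[
\phi(t)=\bigl(R(t)+c_R\bigr)S(t)-R(t)\bigl(S(t)+c_S\bigr)=c_RS(t)-c_SR(t).
\]
Hence $\phi\equiv0$ says $c_RS=c_SR$.

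Now split into cases. If $c_R$ and $c_S$ are both zero, we are done. Suppose instead $(c_R,c_S)\ne(0,0)$. Then $c_RS=c_SR$ makes $R$ and $S$ proportional, with one of them possibly zero. Coprimality then forces the nonzero one to be a nonzero constant: if $R=\kappa S$, then $\gcd(R,S)=S$ must be a unit, and in the degenerate case where one is $0$ the other must be a unit. But a constant has $\Delta_a=0$, so in either subcase both $c_R$ and $c_S$ vanish, contradicting the assumption. So $c_R=c_S=0$ always, and by the first step $R,S\in\Span\{1,t^2+at\}$. The edge case where one of $R,S$ is zero also needs care under the gcd convention: $\gcd(R,0)=R$, so coprimality makes $R$ constant.

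For the second claim, put $u:=t^2+at$. If $R,S$ are independent elements of the two-dimensional space $\Span\{1,u\}$, they form a basis, so $R=\alpha+\beta u$ and $S=\gamma+\delta u$ with $\alpha\delta-\beta\gamma\ne0$. Then $R^2$, $RS$ and $S^2$ all lie in $\Span\{1,u,u^2\}$. Their coefficient matrix with respect to $1,u,u^2$ is the symmetric-square matrix of $\begin{pmatrix}\alpha&\beta\\ \gamma&\delta\end{pmatrix}$, whose determinant is $(\alpha\delta-\beta\gamma)^3\ne0$. A more robust way to see this, which is valid in characteristic two, is to argue directly: $R^2,RS,S^2$ are the degree-$2$ monomials in the independent elements $R,S$ of the polynomial ring $\F[u]$. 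Since $u$ is transcendental over $\F$, the substitution $1,u\mapsto R,S$ is an invertible linear change of variables, and it maps $\{1,u,u^2\}$ bijectively onto a basis. So the two spans coincide, since both are three-dimensional and one contains the other.

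The main obstacle is mild. It consists of handling the degenerate cases (a zero polynomial, or proportional $R,S$) under the coprimality convention, and making sure the determinant and independence argument needs no division by $2$. The direct argument via the invertible substitution avoids the latter issue.
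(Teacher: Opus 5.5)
Your argument is correct, but it runs in the opposite order from the paper's. The paper applies coprimality first. From $R(t)\mid R(t+a)$ it gets $R(t+a)=cR(t)$ and then $S(t+a)=cS(t)$; shifting twice gives $c^2=1$, hence $c=1$ in characteristic two. Only afterwards does it compute which quadratics are $a$-periodic. You apply the characteristic-two computation first: since $\Delta_a$ sends polynomials of degree at most $2$ to constants, the hypothesis linearizes to $c_RS=c_SR$. Coprimality is then needed only to dispose of the proportional or zero cases, where the polynomials are constants and so already invariant.

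Each route buys something different:
\begin{itemize}
\item Yours avoids the divisibility/eigenvalue step and handles the degenerate cases ($R$ or $S$ zero or constant) explicitly; the paper leaves these implicit.
\item The paper's first step is more structural and works verbatim for coprime pairs of any degree. Your linearization relies on $\Delta_a$ of a quadratic being constant.
\end{itemize}
The two ideas in fact combine into a case-free proof. In general $\phi=S\,\Delta_aR-R\,\Delta_aS$, so coprimality gives $R\mid\Delta_aR$. The degree drop then forces $\Delta_aR=0$ (trivially so if $R$ is constant), and likewise $\Delta_aS=0$.

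For the second claim, your determinant $(\alpha\delta-\beta\gamma)^3$ supplies the justification the paper merely asserts. It is an identity over $\mathbb{Z}$, so it already holds in characteristic two. That makes your ``more robust'' substitution argument unnecessary; as phrased, it should be read as the induced map on binary quadratic forms.
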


\begin{IEEEproof}
The displayed identity gives
\[
R(t+a)S(t)=R(t)S(t+a).
\]
Since $\gcd(R(t),S(t))=1$, the polynomial $R(t)$ divides $R(t+a)$.  The two polynomials have
the same degree, so $R(t+a)=cR(t)$ for some $c\in\F^\times$.  Similarly
$S(t+a)=cS(t)$ with the same scalar $c$.  Applying the shift twice gives $c^2=1$.  In
characteristic two, this forces $c=1$.  Hence $R$ and $S$ are individually invariant under
$t\mapsto t+a$.

Let $f(t)=pt^2+qt+r$ be any polynomial of degree at most $2$.  In characteristic two,
\[
f(t+a)-f(t)=pa^2+qa.
\]
Thus $f(t+a)=f(t)$ if and only if $q=pa$, i.e.
\[
f(t)=r+p(t^2+at).
\]
Therefore every invariant polynomial of degree at most $2$ lies in
$\Span\{1,t^2+at\}$.

If $R,S$ are linearly independent, they form a basis of this two-dimensional invariant space.
Taking pairwise products gives the full symmetric-square space
\[
\Span\{1,
\ t^2+at,
\ (t^2+at)^2\}.
\]
\end{IEEEproof}

\begin{lemma}[Three-polynomial involution lemma]
\label{lem:algebraic-projection}
Assume a certificate exists.  For every pair $i\neq k$, there is a nonidentity involution
\[
\tau_{ik}\in \Mob(\F)
\]
such that
\[
t_{kj}=\tau_{ik}(t_{ij})
\qquad\text{for every }j=1,\ldots,6.
\]
Moreover, in characteristic two, if $\tau_{ik}(t)=t+a$ with $a\neq0$, then for every
full-row-rank matrix $W\in\F^{3\times5}$ with right kernel $\Span(a_i,a_k)$, the three
coordinate polynomials of $W\nu(t)$ span
\[
\Span\{1,
\ t^2+at,
\ (t^2+at)^2\}.
\]
\end{lemma}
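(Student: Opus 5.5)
Fix $i\neq k$. By Lemma~\ref{lem:left-independent}, $a_i,a_k$ are independent, so we can choose a full-row-rank $W\in\F^{3\times5}$ with right kernel $\Span(a_i,a_k)$. Set $q(t)=W\nu(t)=(F_1(t),F_2(t),F_3(t))^T$, where the $F_\ell$ are polynomials of degree at most $4$. Since $W$ has rank $3$ and $1,t,\dots,t^4$ are independent, the $F_\ell$ are linearly independent. Applying $W$ to \eqref{eq:basic-certificate} gives $q(t_{ij})=y_{ij}Wb_j$ and $q(t_{kj})=y_{kj}Wb_j$, so $q(t_{ij})$ and $q(t_{kj})$ are proportional for every $j$.

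Next we need $q(t_{ij})\neq0$. Otherwise $\nu(t_{ij})\in\Span(a_i,a_k)$ for some $j$. Combining this with $\nu(t_{ij})\in\Span(a_i,b_j)$ and arguing as in Lemma~\ref{lem:left-independent} should give a contradiction; I expect at most two such $j$ and will handle this by a short side argument.

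The key step is to show the image of $q$ lies on a conic. Divide out $g=\gcd(F_1,F_2,F_3)$; its roots are the parameters $t$ with $q(t)=0$. Consider the curve $t\mapsto[q(t)]\in\mathbb{P}^2$. A point $[q(t_{ij})]$ must be hit by two distinct parameters $t_{ij}$ and $t_{kj}$, so the map is non-injective at six points. The map is nonconstant, since the $F_\ell$ are independent. I then want to show the image curve $C$ is a conic that is doubly covered, i.e.\ $F_2^2=F_1F_3$ after a change of basis. The plan is to compute the degree. Let the reduced map have degree $e\le 4-\deg g$, and write $e=\delta\cdot\deg C$, where $\delta$ is the degree of the map onto its image. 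Then:
\begin{itemize}
\item A line ($\deg C=1$) is excluded, because it would make $F_1,F_2,F_3$ dependent.
\item If $\delta=1$, the image is a birational plane curve of degree $\le4$, which has at most $\binom{3}{2}=3$ singular points. Six distinct double points are impossible unless two pairs coincide at one point. Four parameters mapping to one point is ruled out: their $\nu$'s would span a subspace meeting $\Span(a_i,a_k)$ in codimension $\le1$, and I expect to contradict this using Lemma~\ref{lem:moment-independence}(b), together with $\nu(t_{ij}),\nu(t_{kj})\in\Span(a_i,a_k,b_j)$.
\end{itemize}
I expect this genus/singularity count to be the main obstacle. The alternative I would try first is a direct count: the resultant-style polynomial $N(T,U)/(T-U)$, built from $2\times2$ minors of $q(T),q(U)$, has bounded bidegree, and six off-diagonal common zeros force a common factor.

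So $\delta=2$ and $C$ is an irreducible conic. Then Lemma~\ref{lem:quadratic-normal-form} puts its equation in the form $L_2^2=L_1L_3$. Replacing $W$ by a suitable invertible change of coordinates, we get $F_2^2=F_1F_3$ (after removing the gcd). Lemma~\ref{lem:square-factorization} then gives $F=(\lambda R^2,\eta RS,\mu S^2)$ with $R,S$ coprime and $1\le h\le2$. The case $h=1$ makes $[q]$ injective, which contradicts the proportionalities, so $h=2$. Proportionality of $q(p),q(q')$ is then equivalent to $R(p)S(q')=R(q')S(p)$, that is, $N(p,q')=0$. Lemma~\ref{lem:degree-two-involution} produces a nonidentity involution $\tau_{ik}$ with $t_{kj}=\tau_{ik}(t_{ij})$ for all $j$. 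This requires a distinct zero, which we have, and poles or $\infty$ are handled by the usual conventions.

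For the characteristic-two clause, suppose $\tau_{ik}(t)=t+a$. Then $N(t,t+a)=0$ for six values of $t$. Since $N(t,t+a)$ has degree at most $4$ in $t$, this polynomial vanishes identically. Lemma~\ref{lem:translation-invariant-quadratics} then gives $\Span\{R^2,RS,S^2\}=\Span\{1,t^2+at,(t^2+at)^2\}$. This span has dimension $3$ and degree $\le4$, so $\deg g=0$. Any other admissible $W$ differs from ours by left multiplication by an invertible $3\times3$ matrix, which preserves the span of the coordinate polynomials.
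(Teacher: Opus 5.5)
Your skeleton matches the paper's: project by $W$ with kernel $\Span(a_i,a_k)$, put the curve on an irreducible conic, normalize to $F_2^2=F_1F_3$, factor through coprime $R,S$, rule out $h=1$, and apply Lemma~\ref{lem:degree-two-involution}. Where you differ is in how you reach the conic.

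\textbf{How the paper gets the conic.} It never classifies the image curve. With $e=\deg D$ (your $\deg g$), at least $6-e$ pairs are intact. It imposes $Q(\widetilde q(t_{ij}))=0$ for $5-e$ of them; that is at most five linear conditions on the six coefficients of a ternary quadratic, so a nonzero $Q$ exists. Homogeneity plus proportionality makes $Q$ vanish at the partners $t_{kj}$ too. So $Q(\widetilde q(t))$ has at least $10-2e$ roots but degree at most $8-2e$, hence vanishes identically. Irreducibility of $Q$ follows from linear independence of the $q_\ell$.

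\textbf{Your degree/genus count also works}, once filled in as follows.
\begin{enumerate}
\item At most two parameters satisfy $\nu(t)\in\Span(a_i,a_k)$, so at most two pairs are lost when you remove the gcd.
\item No two surviving pairs share an image point. Four distinct $\nu(t)$ with proportional nonzero $W$-images would span a $4$-dimensional space meeting the $2$-dimensional $\ker W$ in dimension at least $3$. This settles your ``four parameters at one point'' case directly.
\item If $\delta=1$, the image would be a rational plane curve of degree $d\le4$ with at least four singular points, more than $(d-1)(d-2)/2\le3$ allows.
\item Then $\delta\cdot\deg C\le4$ with $\deg C\ge2$ leaves only $\delta=\deg C=2$.
\end{enumerate}
Your route gives a conceptual picture: the curve is a doubly covered conic. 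It costs the genus formula for singular plane curves in arbitrary characteristic, whereas the paper uses only linear algebra and root counting.

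\textbf{The genuine gap is your second paragraph.} The conclusion must hold for all six $j$. A pair with $q(t_{ij})=0$ or $q(t_{kj})=0$ carries no proportionality information. There is no evident direct argument, in the style of Lemma~\ref{lem:left-independent}, that such pairs cannot occur.

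You do not need one. Lemma~\ref{lem:square-factorization} gives $2h\le\max_\ell\deg F_\ell\le4-\deg g$, not merely $h\le2$. Once a surviving pair excludes $h=1$, the equality $h=2$ forces $\deg g=0$. Then every pair is intact, and Lemma~\ref{lem:degree-two-involution} yields $t_{kj}=\tau_{ik}(t_{ij})$ for all six $j$. This is exactly how the paper closes the point.

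This step must come before your characteristic-two argument, which uses all six pairs. With it in place, that argument is fine: $N(t,t+a)$ vanishes at the six points $t_{ij}$, then Lemma~\ref{lem:translation-invariant-quadratics} applies, and the span is invariant under $W\mapsto GW$ for invertible $G$.
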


\begin{IEEEproof}
Fix $i\neq k$.  By Lemma~\ref{lem:left-independent},
\[
U:=\Span(a_i,a_k)
\]
is a two-dimensional subspace of $\F^5$.  Choose a full-row-rank matrix
\[
W\in \F^{3\times5}
\]
with right kernel $U$, and define
\[
q(t)=W\nu(t)=(q_1(t),q_2(t),q_3(t))^T\in\F[t]^3.
\]
The polynomials $q_1,q_2,q_3$ are linearly independent.  Indeed, a relation
$c_1q_1+c_2q_2+c_3q_3\equiv0$ would give $(c^TW)\nu(t)\equiv0$, hence $c^TW=0$ because
$1,t,t^2,t^3,t^4$ are linearly independent.  Since $W$ has row rank $3$, this gives $c=0$.

Applying $W$ to \eqref{eq:basic-certificate} gives
\[
q(t_{ij})=y_{ij}Wb_j,
\qquad
q(t_{kj})=y_{kj}Wb_j.
\]
Thus $q(t_{ij})$ and $q(t_{kj})$ are proportional for every $j$, allowing the zero vector for
now.

Let
\[
\begin{aligned}
D(t)&=\gcd(q_1,q_2,q_3),\\
q(t)&=D(t)\widetilde q(t),\\
e&:=\deg D.
\end{aligned}
\]
We first prove $e\leq2$.  The three $q_\ell$ span a $3$-dimensional subspace of
$\F[t]_{\leq4}$.  Since all are divisible by $D$, this subspace lies inside
\[
D(t)\F[t]_{\leq4-e},
\]
whose dimension is $5-e$.  Therefore $3\leq5-e$, so $e\leq2$.

The components of $\widetilde q$ have no common factor.  Since $\F$ is algebraically closed,
they have no common root.  Hence $q(t_0)=0$ if and only if $D(t_0)=0$.
Call the pair $\{t_{ij},t_{kj}\}$ ruined if one of its two values is a root of $D$.  The six
pairs are disjoint because all $36$ evaluation points are distinct.  Since $D$ has at most
$e$ distinct roots, at most $e$ pairs are ruined.  Hence at least $6-e$ pairs are intact.
Choose exactly $5-e$ intact pairs.

The vector space of homogeneous quadratic forms in three variables has dimension $6$.
The conditions
\[
Q(\widetilde q(t_{ij}))=0
\]
for the chosen $5-e$ intact pairs are homogeneous linear conditions on the coefficients of
$Q$.  Hence there is a nonzero homogeneous quadratic $Q$ satisfying these conditions.  For
each chosen intact pair, the two nonzero vectors $\widetilde q(t_{ij})$ and
$\widetilde q(t_{kj})$ are proportional; since $Q$ is homogeneous, $Q$ vanishes at both
values of the pair.  Therefore
\[
P(t):=Q(\widetilde q(t))
\]
has at least $2(5-e)=10-2e$ distinct roots.  But each coordinate of $\widetilde q$ has degree
at most $4-e$, so
\[
\deg P\leq2(4-e)=8-2e.
\]
Thus $P\equiv0$.

The quadratic $Q$ is irreducible.  If $Q=L_1L_2$ were a product of linear forms, then
\[
L_1(\widetilde q(t))L_2(\widetilde q(t))\equiv0.
\]
Since $\F[t]$ is an integral domain, one factor, say $L_1(\widetilde q(t))$, is identically
zero.  Multiplying by $D$ gives a nontrivial linear relation among $q_1,q_2,q_3$, impossible.

By Lemma~\ref{lem:quadratic-normal-form}, after an invertible linear change of the three
coordinates of $\widetilde q$, we obtain polynomials $F_1,F_2,F_3$ such that
\[
F_2(t)^2=F_1(t)F_3(t).
\]
The polynomials $F_1,F_2,F_3$ remain linearly independent, have no common factor, and have
maximum degree at most $4-e$.  Lemma~\ref{lem:square-factorization} gives coprime polynomials
$R,S$ with
\[
F_1=\lambda R^2,
\qquad
F_2=\eta RS,
\qquad
F_3=\mu S^2
\]
for nonzero constants, and
\[
2h\leq4-e,
\qquad
h:=\max(\deg R,\deg S),
\qquad
h\neq0.
\]

For every chosen intact pair, the triples
\[
(F_1(t_{ij}),F_2(t_{ij}),F_3(t_{ij}))
\quad\text{and}\quad
(F_1(t_{kj}),F_2(t_{kj}),F_3(t_{kj}))
\]
are proportional.  Using the displayed factorization, this implies that the two pairs
\[
(R(t_{ij}),S(t_{ij}))
\quad\text{and}\quad
(R(t_{kj}),S(t_{kj}))
\]
are proportional.  Indeed, for nonzero pairs $(r,s)$ and $(r',s')$, proportionality of
$(r^2,rs,s^2)$ and $(r'^2,r's',s'^2)$ gives $(rs'-r's)^2=0$, hence $rs'=r's$ in the field
$\F$.  Equivalently,
\[
R(t_{ij})S(t_{kj})-R(t_{kj})S(t_{ij})=0.
\]

If $h=1$, the equality above forces $t_{ij}=t_{kj}$ for every chosen intact pair: two
coprime linear polynomials separate distinct scalar values.  This contradicts distinctness of
the Vandermonde evaluation points.  Hence $h=2$.  The inequality $2h\leq4-e$ now forces
$e=0$.

Thus $D$ is constant, so no evaluation point is ruined.  The same coprime quadratic pair
$R,S$ satisfies
\[
R(t_{ij})S(t_{kj})-R(t_{kj})S(t_{ij})=0
\]
for all six values of $j$.  Since $t_{ij}\neq t_{kj}$, Lemma~\ref{lem:degree-two-involution}
gives a nonidentity involution $\tau_{ik}\in\Mob(\F)$ with
\[
t_{kj}=\tau_{ik}(t_{ij}),
\qquad j=1,\ldots,6.
\]

For the final assertion in characteristic two, suppose $\tau_{ik}(t)=t+a$, $a\neq0$.  The
same $R,S$ satisfy
\[
R(t+a)S(t)-R(t)S(t+a)\equiv0.
\]
By Lemma~\ref{lem:translation-invariant-quadratics},
\[
\Span\{R^2,RS,S^2\}=\Span\{1,t^2+at,(t^2+at)^2\}.
\]
The coordinate change from $\widetilde q$ to $(F_1,F_2,F_3)$ was invertible and $D$ is
constant, so the coordinate polynomials of $q(t)=W\nu(t)$ span the same space.  This proves
the characteristic-two assertion.
\end{IEEEproof}

\subsection{Fractional-linear involutions}

\begin{lemma}[Pairwise quotient involutions]
\label{lem:mob-quotients}
Let $g_1,\ldots,g_s\in \Mob(\F)$ be distinct fractional linear transformations such that
$g_k g_i^{-1}$ is an involution for every $i\neq k$.
\begin{enumerate}[label=(\roman*)]
\item If $\operatorname{char}\F\neq2$, then $s\leq4$.
\item If $\operatorname{char}\F=2$, then after one fractional linear change of variable, all
$g_i$ are translations
\[
g_i(t)=t+\alpha_i,
\]
with distinct $\alpha_i\in\F$.
\end{enumerate}
\end{lemma}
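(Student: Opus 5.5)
The plan is to normalize so that one of the transformations is the identity, reduce the hypothesis to a statement about pairwise commuting involutions, and then classify involutions in $\PGL_2(\F)$ over the algebraically closed field $\F$ according to whether $\operatorname{char}\F=2$.

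\emph{Normalization.} Set $h_i:=g_ig_1^{-1}$. Then $h_1=\id$, the $h_i$ are distinct, and $h_kh_i^{-1}=g_kg_i^{-1}$, so the hypothesis is unchanged. Taking $i=1$ shows that each $h_k$ with $k\ge2$ is a nonidentity involution. For $i,k\ge2$ distinct, $h_kh_i$ is also a nonidentity involution. Since $(h_kh_i)^2=\id$ and both factors are involutions, $h_kh_i=(h_kh_i)^{-1}=h_ih_k$. So $\{h_2,\ldots,h_s\}$ consists of pairwise commuting nonidentity involutions whose pairwise products are again nonidentity involutions. The case $s\le1$ is trivial.

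\emph{Characteristic $\neq 2$.} A nonidentity involution lifts to a matrix with $X^2$ scalar and $X$ not scalar. Such a matrix has trace $0$, hence two distinct eigenvalues $\pm c$, so the involution has exactly two fixed points on $\widehat\F$. Conjugate so that $h_2$ has fixed points $0$ and $\infty$, i.e.\ $h_2(t)=-t$.
\begin{itemize}
\item Anything commuting with $h_2$ permutes its fixed-point set $\{0,\infty\}$, so each $h_k$ has the form $t\mapsto\lambda t$ or $t\mapsto\lambda/t$.
\item Among maps $t\mapsto\lambda t$, the only nonidentity involution is $-t$, which is $h_2$ itself.
\item Two maps $\lambda/t$ and $\mu/t$ have product $t\mapsto(\lambda/\mu)t$. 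This is a nonidentity involution exactly when $\mu=-\lambda$.
\end{itemize}
Hence at most two of the $h_k$ have the form $\lambda/t$, namely $\lambda/t$ and $-\lambda/t$. Together with $h_2$ this gives at most three nonidentity elements, so $s\le4$.

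\emph{Characteristic $2$.} A nonidentity involution still has trace $0$. Its characteristic polynomial is then $X^2+\det=(X+\sqrt{\det})^2$, so it is unipotent up to scaling and has exactly one fixed point. Conjugate so that $h_2$ fixes $\infty$; then $h_2(t)=t+a$ with $a\ne0$. Each $h_k$ commutes with $h_2$, so it maps the unique fixed point of $h_2$ to itself and therefore fixes $\infty$. Thus $h_k(t)=\lambda t+\beta$. The involution condition gives $\lambda^2=1$, so $\lambda=1$ in characteristic $2$, and hence $h_k$ is a translation $t+\alpha_k$. The $\alpha_k$ are distinct because the $h_k$ are, and $\alpha_1=0$.

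\emph{Returning to the $g_i$.} Write $\varphi$ for the conjugating map, so that $\varphi h_i\varphi^{-1}$ is the translation by $\alpha_i$. Then $g_i=h_ig_1$, and the "one fractional linear change of variable" in part (ii) is $u=\varphi(g_1(t))$. In the variable $u$, each $g_i$ acts as $u\mapsto u+\alpha_i$.

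The step I expect to need the most care is this last one. The change of variable has to combine the right translation by $g_1^{-1}$ with the conjugation by $\varphi$, and I must check that this matches how the lemma is applied to the values $t_{ij}$ later in the argument. The fixed-point classification itself is routine once algebraic closedness is used to diagonalize, or to put in Jordan form, the lifted $2\times2$ matrix.
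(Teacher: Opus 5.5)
Your core argument is correct and is essentially the paper's proof. You normalize to $h_1=\id$, deduce that $h_2,\ldots,h_s$ are pairwise commuting nonidentity involutions, and then use fixed points: two of them, normalized to $0,\infty$, when $\operatorname{char}\F\neq 2$, and a unique common one sent to $\infty$ when $\operatorname{char}\F=2$. Your trace/eigenvalue computation just supplies details the paper states briefly.

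The one thing to fix is your final paragraph, and your suspicion about it is justified. Set $\psi=\varphi g_1$, so your new variable is $u=\psi(t)$. In that single coordinate, $g_i$ becomes $\psi g_i\psi^{-1}=\varphi g_1 h_i\varphi^{-1}=(\varphi g_1\varphi^{-1})\circ(u\mapsto u+\alpha_i)$. For $i=1$ this is $\varphi g_1\varphi^{-1}$, which is not a translation in general. What you have actually shown is $\varphi\circ g_i\circ\psi^{-1}=(u\mapsto u+\alpha_i)$, which uses one coordinate on the source and a different one on the target. That is not ``one fractional linear change of variable.''

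No single change of variable can do better, because part (ii) fails for unnormalized families. In characteristic $2$, take $c\notin\{0,1\}$, $g_1(t)=ct$ and $g_2(t)=ct+1$. Then $g_2g_1^{-1}=g_1g_2^{-1}$ is $t\mapsto t+1$, an involution. Yet $g_1$ has the two fixed points $0,\infty$, so it is not conjugate to any translation.

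The statement must therefore be read after the normalization $g_1=\id$. This is how the paper proves it and how it is used: there $g_1=\id$ and $g_i=\tau_{1i}$. The change of variable is conjugation by $\varphi$, applied simultaneously to all the $t_{ij}$, which gives $\varphi(t_{ij})=\varphi(t_{1j})+\alpha_i$. So drop the ``returning'' step and state part (ii) for $h_i=g_ig_1^{-1}$, or equivalently assume $g_1=\id$. Part (i) is unaffected, since the bound $s\le 4$ is invariant under this normalization.
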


\begin{IEEEproof}
Replace each $g_i$ by $g_i g_1^{-1}$.  Then $g_1=\id$, and each $g_i$ for $i>1$ is an
involution.  For $i,k>1$, the quotient condition says $g_kg_i^{-1}=g_kg_i$ is also an
involution.  Hence
\[
(g_kg_i)^2=\id.
\]
Since $g_i^2=g_k^2=\id$, this implies $g_kg_i=g_ig_k$.  Thus the $g_i$ commute.

Assume first $\operatorname{char}\F\neq2$.  Choose a nonidentity involution among the $g_i$
and change variable so that it is $t\mapsto -t$.  Any fractional linear transformation
commuting with $t\mapsto -t$ must preserve the set $\{0,\infty\}$.  Hence it has one of the
forms
\[
t\mapsto \lambda t,
\qquad
 t\mapsto \lambda/t.
\]
Among maps $t\mapsto\lambda t$, being an involution forces $\lambda=\pm1$.  If two maps
$t\mapsto\lambda/t$ and $t\mapsto\mu/t$ both occur in a commuting elementary two-group,
their product is $t\mapsto(\lambda/\mu)t$, so $\lambda/\mu=\pm1$.  Thus there are at most
four such transformations.  Therefore $s\leq4$.

Now assume $\operatorname{char}\F=2$.  A nonidentity fractional-linear involution has a
unique fixed point in $\widehat\F$: its representing $2\times2$ matrix has a repeated
eigenvalue, and if it had two independent eigenvectors it would be scalar.  Since the
involutions commute, they share this fixed point.  Indeed, if $a$ has unique fixed point
$P$ and $b$ commutes with $a$, then $b(P)$ is also fixed by $a$, so $b(P)=P$.

Make one fractional linear change of variable sending the common fixed point to $\infty$.
Then every $g_i$ is affine:
\[
g_i(t)=\lambda_i t+\beta_i.
\]
The equation $g_i^2=\id$ gives $\lambda_i^2=1$, hence $\lambda_i=1$ in characteristic two.
Thus $g_i(t)=t+\alpha_i$.  The $\alpha_i$ are distinct because the $g_i$ are distinct.
\end{IEEEproof}

\subsection{Completion of proof}

Assume, for contradiction, that a full-rank generalized Vandermonde certificate of distance
$6$ exists.  Work over an algebraically closed field as above.  We have distinct scalars
$t_{ij}$ and vectors $a_i,b_j\in\F^5$ satisfying
\[
\nu(t_{ij})=x_{ij}a_i+y_{ij}b_j.
\]

For each $i\neq k$, Lemma~\ref{lem:algebraic-projection} gives a nonidentity involution
$\tau_{ik}\in\Mob(\F)$ such that
\[
t_{kj}=\tau_{ik}(t_{ij}),
\qquad j=1,\ldots,6.
\]
Fix the first left row and define
\[
g_1=\id,
\qquad
 g_i=\tau_{1i}\quad (i=2,\ldots,6).
\]
Then
\[
t_{ij}=g_i(t_{1j})
\qquad\text{for all }i,j.
\]
For $i\neq k$,
\[
g_k(t_{1j})=t_{kj}=\tau_{ik}(t_{ij})=\tau_{ik}g_i(t_{1j})
\]
for $j=1,\ldots,6$.  The six values $t_{11},\ldots,t_{16}$ are distinct.  A fractional
linear transformation is determined by its values on three distinct elements of $\widehat\F$.
Therefore
\[
g_k=\tau_{ik}g_i,
\qquad\text{so}\qquad
\tau_{ik}=g_kg_i^{-1}.
\]
Thus every pairwise quotient $g_kg_i^{-1}$ is an involution.  The $g_i$ are distinct,
because $g_i=g_k$ would imply $t_{ij}=t_{kj}$ for all $j$, contradicting distinctness of the
$36$ Vandermonde evaluation points.

If $\operatorname{char}\F\neq2$, Lemma~\ref{lem:mob-quotients} gives $6\leq4$, a
contradiction.  It remains to handle characteristic two.

Assume $\operatorname{char}\F=2$.  By Lemma~\ref{lem:mob-quotients}, after a fractional
linear change of the parameter $t$,
\[
g_i(t)=t+\alpha_i,
\qquad i=1,\ldots,6,
\]
with the $\alpha_i$ distinct.  The common fixed point of these translations is $\infty$.
No value $t_{1j}$ is this fixed point; otherwise all six values
$g_i(t_{1j})$ would be equal, contradicting distinctness.  Hence, after this change of
variable, all relevant evaluation points remain finite.  The induced change of basis on the
five-dimensional space spanned by $1,t,t^2,t^3,t^4$ lets us continue to write
\[
\nu(t)=(1,t,t^2,t^3,t^4)^T.
\]

For $i\neq k$, the involution
\[
\tau_{ik}=g_kg_i^{-1}
\]
is the translation
\[
t\longmapsto t+a_{ik},
\qquad
 a_{ik}:=\alpha_i+
\alpha_k\neq0.
\]
Choose any full-row-rank matrix $W_{ik}\in\F^{3\times5}$ with right kernel
$\Span(a_i,a_k)$.  By the characteristic-two part of Lemma~\ref{lem:algebraic-projection},
the coordinate polynomials of $W_{ik}\nu(t)$ span
\[
\Span\{1,
\ t^2+a_{ik}t,
\ (t^2+a_{ik}t)^2\}.
\]
Since
\[
(t^2+a_{ik}t)^2=t^4+a_{ik}^2t^2
\]
in characteristic two, every polynomial in this span has zero coefficient of $t^3$.  Therefore
every row of $W_{ik}$ has zero fourth coordinate.  Equivalently,
\[
Q_0:=(0,0,0,1,0)^T\in\ker W_{ik}=\Span(a_i,a_k)
\]
for every pair $i\neq k$.

Choose $p$ such that $a_p$ is not proportional to $Q_0$.  Such a $p$ exists because the
$a_i$ are pairwise linearly independent.  For every $k\neq p$,
\[
Q_0\in\Span(a_p,a_k).
\]
Since $a_p$ and $Q_0$ are linearly independent, this forces
\[
a_k\in\Span(a_p,Q_0).
\]
Therefore all six vectors $a_1,\ldots,a_6$ lie in one fixed two-dimensional subspace
\[
P:=\Span(a_p,Q_0)\subset\F^5.
\]

Fix any right index $j$.  From \eqref{eq:basic-certificate}, for every $i$,
\[
\nu(t_{ij})=x_{ij}a_i+y_{ij}b_j\in P+\Span(b_j).
\]
The space $P+\Span(b_j)$ has dimension at most $3$.  Hence the six distinct vectors
\[
\nu(t_{1j}),\ldots,\nu(t_{6j})
\]
all lie in a three-dimensional vector space.  This contradicts
Lemma~\ref{lem:moment-independence}, since any four distinct vectors of the form $\nu(t)$
are linearly independent.

This contradiction rules out characteristic two.  Therefore no full-rank generalized
Vandermonde certificate of distance $6$ exists over any field, proving
Theorem~\ref{thm:no-cert}.

\bibliographystyle{ieeetr}
\bibliography{ref}

@article{hallmarriage,
author = {Hall, P.},
title = {On Representatives of Subsets},
journal = {Journal of the London Mathematical Society},
volume = {s1-10},
number = {1},
pages = {26-30},
doi = {https://doi.org/10.1112/jlms/s1-10.37.26},
url = {https://londmathsoc.onlinelibrary.wiley.com/doi/abs/10.1112/jlms/s1-10.37.26},
eprint = {https://londmathsoc.onlinelibrary.wiley.com/doi/pdf/10.1112/jlms/s1-10.37.26},
month = jan,
year = {1935}
}

@article{dauMA,
author = {Son Hoang Dau and Wentu Song and Chau Yuen},
title = {On Simple Multiple Access Networks},
year = {2015},
issue_date = {Feb. 2015},
publisher = {IEEE Press},
volume = {33},
number = {2},
issn = {0733-8716},
url = {https://doi.org/10.1109/JSAC.2014.2384295},
doi = {10.1109/JSAC.2014.2384295},
journal = {IEEE J.Sel. A. Commun.},
month = feb,
pages = {236–249},
numpages = {14}
}

@INPROCEEDINGS{halbawidistribute,
  author={Halbawi, Wael and Ho, Tracey and Yao, Hongyi and Duursma, Iwan},
  booktitle={2014 IEEE International Symposium on Information Theory}, 
  title={Distributed {Reed--Solomon} codes for simple multiple access networks}, 
  year={2014},
  month=jun,
  volume={},
  number={},
  pages={651-655},
  doi={10.1109/ISIT.2014.6874913}}

@INPROCEEDINGS{DauGMMDS,
  author={Dau, Son Hoang and Song, Wentu and Yuen, Chau},
  booktitle={2014 IEEE International Symposium on Information Theory}, 
  title={On the existence of {MDS} codes over small fields with constrained generator matrices}, 
  year={2014},
  month=jun,
  volume={},
  number={},
  pages={1787-1791},
  doi={10.1109/ISIT.2014.6875141}}

@ARTICLE{YildizGMMDS,
  author={Yildiz, Hikmet and Hassibi, Babak},
  journal={IEEE Transactions on Information Theory}, 
  title={Optimum Linear Codes With Support-Constrained Generator Matrices Over Small Fields}, 
  year={2019},
  month=dec,
  volume={65},
  number={12},
  pages={7868-7875},
  doi={10.1109/TIT.2019.2932663}}

@INPROCEEDINGS{LovettGMMDS,
  author={Lovett, Shachar},
  booktitle={2018 IEEE 59th Annual Symposium on Foundations of Computer Science (FOCS)}, 
  title={{MDS} matrices over Small Fields: A Proof of the {GM-MDS} Conjecture}, 
  year={2018},
  month=oct,
  volume={},
  number={},
  pages={194-199},
  doi={10.1109/FOCS.2018.00027}}

@ARTICLE{LDPCCap,
  author={Richardson, T.J. and Shokrollahi, M.A. and Urbanke, R.L.},
  journal={IEEE Transactions on Information Theory}, 
  title={Design of capacity-approaching irregular low-density parity-check codes}, 
  year={2001},
  month=feb,
  volume={47},
  number={2},
  pages={619-637},
  doi={10.1109/18.910578}}

@ARTICLE{GallagerLDPC,
  author={Gallager, R.},
  journal={IRE Transactions on Information Theory}, 
  title={Low-density parity-check codes}, 
  year={1962},
  month=jan,
  volume={8},
  number={1},
  pages={21-28},
  doi={10.1109/TIT.1962.1057683}}

@article{FowlerSurface,
   title={Surface codes: Towards practical large-scale quantum computation},
   volume={86},
   ISSN={1094-1622},
   url={http://dx.doi.org/10.1103/PhysRevA.86.032324},
   DOI={10.1103/physreva.86.032324},
   number={3},
   journal={Physical Review A},
   publisher={American Physical Society (APS)},
   author={Fowler, Austin G. and Mariantoni, Matteo and Martinis, John M. and Cleland, Andrew N.},
   year={2012},
   month=sep}

@misc{KitaevSurface,
      title={Quantum codes on a lattice with boundary}, 
      author={S. B. Bravyi and A. Yu. Kitaev},
      year={1998},
      eprint={quant-ph/9811052},
      archivePrefix={arXiv},
      primaryClass={quant-ph},
      url={https://arxiv.org/abs/quant-ph/9811052}, 
}

@article{gottesmanLDPC,
author = {Gottesman, Daniel},
title = {Fault-tolerant quantum computation with constant overhead},
year = {2014},
issue_date = {November 2014},
publisher = {Rinton Press, Incorporated},
address = {Paramus, NJ},
volume = {14},
number = {15–16},
issn = {1533-7146},
journal = {Quantum Info. Comput.},
month = nov,
pages = {1338–1372},
numpages = {35}
}

@ARTICLE{TamoLRC,
  author={Sharma, Sandeep and Ramkumar, Vinayak and Tamo, Itzhak},
  journal={IEEE Journal on Selected Areas in Information Theory}, 
  title={Quantum Locally Recoverable Codes via Good Polynomials}, 
  year={2025},
  month=may,
  volume={6},
  number={},
  pages={100-110},
  doi={10.1109/JSAIT.2025.3567480}}

@inproceedings{GolowichLRC,
  author    = {Louis Golowich and Venkatesan Guruswami},
  title     = {Quantum Locally Recoverable Codes},
  booktitle = {Proceedings of the 2025 Annual ACM-SIAM Symposium on Discrete Algorithms (SODA)},
  year      = {2025},
  month     = jan,
  publisher = {Society for Industrial and Applied Mathematics (SIAM)},
  address   = {Philadelphia, PA},
  pages     = {5512--5522},
  doi       = {10.1137/1.9781611978322.188},
  url       = {https://epubs.siam.org/doi/abs/10.1137/1.9781611978322.188}
}

@article{CalderbankCSS,
   title={Good quantum error-correcting codes exist},
   volume={54},
   ISSN={1094-1622},
   url={http://dx.doi.org/10.1103/PhysRevA.54.1098},
   DOI={10.1103/physreva.54.1098},
   number={2},
   journal={Physical Review A},
   publisher={American Physical Society (APS)},
   author={Calderbank, A. R. and Shor, Peter W.},
   year={1996},
   month=aug, pages={1098–1105} }

@article{SteaneCSS,
author = {Steane, Andrew},
year = {1996},
month = nov,
pages = {},
title = {Multiple Particle Interference and Quantum Error Correction},
volume = {452},
journal = {Proceedings of the Royal Society A: Mathematical, Physical and Engineering Sciences},
doi = {10.1098/rspa.1996.0136}
}

@inbook{GrasslRS,
   title={Quantum Reed--Solomon Codes},
   ISBN={9783540467960},
   ISSN={0302-9743},
   url={http://dx.doi.org/10.1007/3-540-46796-3_23},
   DOI={10.1007/3-540-46796-3_23},
   booktitle={Applied Algebra, Algebraic Algorithms and Error-Correcting Codes},
   publisher={Springer Berlin Heidelberg},
   author={Grassl, Markus and Geiselmann, Willi and Beth, Thomas},
   year={1999},
   pages={231–244} }

@ARTICLE{halbawisparse,
  author={Halbawi, Wael and Liu, Zihan and Duursma, Iwan M. and Dau, Hoang and Hassibi, Babak},
  journal={IEEE Transactions on Information Theory}, 
  title={Sparse and Balanced {Reed--Solomon} and {Tamo–Barg Codes}}, 
  year={2019},
  month=jan,
  volume={65},
  number={1},
  pages={118-130},
  doi={10.1109/TIT.2018.2873128}}

@article{RuizQuantum,
   title={{LDPC-cat} codes for low-overhead quantum computing in 2D},
   volume={16},
   ISSN={2041-1723},
   url={http://dx.doi.org/10.1038/s41467-025-56298-8},
   DOI={10.1038/s41467-025-56298-8},
   number={1},
   journal={Nature Communications},
   publisher={Springer Science and Business Media LLC},
   author={Ruiz, Diego and Guillaud, Jérémie and Leverrier, Anthony and Mirrahimi, Mazyar and Vuillot, Christophe},
   year={2025},
   month=jan }

@article{SaboQuantum,
   title={Weight-Reduced Stabilizer Codes with Lower Overhead},
   volume={5},
   ISSN={2691-3399},
   url={http://dx.doi.org/10.1103/PRXQuantum.5.040302},
   DOI={10.1103/prxquantum.5.040302},
   number={4},
   journal={PRX Quantum},
   publisher={American Physical Society (APS)},
   author={Sabo, Eric and Gunderman, Lane G. and Ide, Benjamin and Vasmer, Michael and Dauphinais, Guillaume},
   year={2024},
   month=oct }

@ARTICLE{TamoBargLRC,
  author={Tamo, Itzhak and Barg, Alexander},
  journal={IEEE Transactions on Information Theory},
  title={A Family of Optimal Locally Recoverable Codes},
  year={2014},
  month=aug,
  volume={60},
  number={8},
  pages={4661-4676},
  doi={10.1109/TIT.2014.2321280}}

@misc{3GPP5G,
  author={{3rd Generation Partnership Project}},
  title={{NR}; Multiplexing and channel coding ({3GPP} {TS} 38.212)},
  year={2018},
  note={Release 15, v15.2.0}}

@INPROCEEDINGS{SipserSpielman,
  author={Sipser, M. and Spielman, D.A.},
  booktitle={Proceedings of the Twenty-Sixth Annual ACM Symposium on Theory of Computing},
  title={Expander codes},
  year={1996},
  month=may,
  pages={261-271},
  doi={10.1145/237814.237985}}

@ARTICLE{BlaumPMDS,
  author={Blaum, Mario and Hafner, James Lee and Hetzler, Steven},
  journal={IEEE Transactions on Information Theory},
  title={Partial-{MDS} Codes and Their Application to {RAID} Type of Architectures},
  year={2013},
  month=jul,
  volume={59},
  number={7},
  pages={4510--4519},
  doi={10.1109/TIT.2013.2252395}}

\end{document}